
\documentclass[11pt]{article}
\usepackage{times}
\usepackage{helvet}
\usepackage{courier}
\usepackage{graphicx}
\usepackage{subfigure}
\usepackage{fullpage}
\frenchspacing
\setlength{\pdfpagewidth}{8.5in}
\setlength{\pdfpageheight}{11in}

\usepackage{tikz}
\usetikzlibrary{shapes.gates.logic.US,trees,positioning,arrows}
\usetikzlibrary{shapes.geometric}
\usetikzlibrary{positioning}

\tikzset{
  treenode/.style = {align=center, inner sep=1pt},
  ma/.style = {draw,treenode, shape border rotate=90, isosceles triangle,isosceles triangle apex angle=60, black, minimum width=8mm},
  mi/.style = {ma, shape border rotate=-90, font=2},
  ch/.style = {draw, treenode, circle, minimum width=8mm, black},
  lbl/.style = {fill=white,inner sep=1pt},
  no/.style = {draw, treenode, color=white, text=black}
}

\usepackage{amsmath, amssymb}
\usepackage{mathtools}
\usepackage[shortlabels]{enumitem}
\usepackage{comment}
%
\DeclareMathOperator*{\argmax}{arg\,max}

\DeclarePairedDelimiter\floor{\lfloor}{\rfloor}

\newcommand{\calS}{{\cal S}}
\newcommand{\calP}{{\cal P}}
\newcommand{\calN}{{\cal N}}
\newcommand{\calZ}{{\cal Z}}
\newcommand{\calA}{{\cal A}}
\newcommand{\calI}{{\cal I}}
\newcommand{\calT}{{\cal T}}
\newcommand{\calC}{{\cal C}}
\newcommand{\calBR}{{\cal BR}}

\newtheorem{definition}{Definition}

\newtheorem{claim}{Claim}

\newtheorem{example}{Example}
\newtheorem{theorem}{Theorem}
\newtheorem{lemma}{Lemma}
\newenvironment{proof}{\paragraph{Proof:}}{\hfill$\square$}

\newcommand{\eps}{\varepsilon}

\usepackage{pifont}
\newcommand{\cmark}{\ding{51}}%
\newcommand{\xmark}{\ding{55}}%

\title{Computation of Stackelberg Equilibria of Finite Sequential Games\footnote{The authors aknowledge support from
the Danish National
Research Foundation and The National Science Foundation of China
(under the grant
61361136003) for the Sino-Danish Center for the Theory of Interactive Computation and
from the Center for Research in Foundations of Electronic Markets (CFEM), supported by
the Danish Strategic Research Council.
Branislav Bosansky was also supported by the Czech Science Foundation (grant no. 15-23235S).
Simina Br\^{a}nzei
was also supported by ISF grant 1435/14 administered by the Israeli Academy of Sciences and Israel-USA Bi-national Science Foundation (BSF) grant  2014389 and the I-CORE Program of the Planning and Budgeting Committee and The Israel Science Foundation.
}}

\author{Branislav Bo{\v s}ansk{\' y}\footnote{Czech Technical University in Prague, The Czech Republic. E-mail: \texttt{bosansky@agents.fel.cvut.cz}}
\and
Simina Br\^{a}nzei\footnote{
Hebrew University of Jerusalem, Israel. E-mail: \texttt{simina.branzei@gmail.com}}
\and
Kristoffer Arnsfelt Hansen\footnote{Aarhus University, Denmark. E-mail: \texttt{arnsfelt@cs.au.dk}}
\and
Peter Bro Miltersen\footnote{Aarhus University, Denmark. E-mail: \texttt{bromille@cs.au.dk}}
\and
Troels Bjerre S{\o}rensen\footnote{IT-University of Copenhagen, Denmark. E-mail: \texttt{trbj@itu.dk}}
}
\date{}

\begin{document}

\maketitle

\begin{abstract}
The Stackelberg equilibrium is a solution concept that describes optimal strategies to commit
to: Player~1 (\emph{the leader}) first commits to a strategy that is publicly announced, 
then Player~2 (\emph{the follower}) plays a best response to the leader's choice.
We study the problem of computing Stackelberg equilibria in finite sequential (i.e., extensive-form) games and provide new exact algorithms, approximate
algorithms, and hardness results for finding equilibria for several classes of such two-player games.
\end{abstract}


\section{Introduction}
The Stackelberg competition is a game theoretic model introduced by von Stackelberg~\cite{Stackelberg34} for studying market structures. 
The original formulation of a Stackelberg duopoly captures the scenario of two firms that compete by selling homogeneous products. 
One firm---\emph{the leader}---first decides the quantity to sell and announces it publicly, while the second firm---\emph{the follower}---decides its own production only after observing the announcement of the first firm. 
The leader firm must have commitment power (e.g., is the monopoly in an industry) and cannot undo its publicly announced strategy, while the follower firm (e.g., a new competitor) plays a best response to the leader's chosen strategy. 

The Stackelberg competition has been an important model in economics ever since (see, e.g.,~\cite{sherali1984multiple,HAMILTON199029,AMIR19991,VANDAMME1999105,Matsumura2003,Etro2007}), while the solution concept of a \emph{Stackelberg equilibrium} has been studied in a rich body of literature in computer science, with a number of important 
real-world applications developed in the last decade~\cite{tambe2011}.
The Stackelberg equilibrium concept can be applied to any game with two players (e.g., in normal or extensive form) and stipulates that the leader first commits to a strategy, while the follower observes the leader's choice and best responds to it. The leader must have commitment power; in the context of firms, the act of moving first in an industry, such as by opening a shop, requires financial investment and is evidently a form of commitment. In other scenarios, the leader's commitment refers to ways of responding to future events, should certain situations be reached, and in such cases the leader must have a way of enforcing credible threats. The leader can always commit to a Nash equilibrium strategy, however it can often obtain a better payoff by choosing some other strategy profile. While there exist generalizations of Stackelberg equilibrium to multiple players, the real-world implementations to date have been derived from understanding the two player model, and for this reason we will focus on the two-player setting.

One of the notable applications using the conceptual framework of Stackelberg equilibrium has been the development of algorithms for protecting airports and ports in the United States (deployed so far in Boston, Los Angeles, New York). More recent ongoing work (see, e.g., \cite{Nguyen2015}), explores additional problems such as protecting wildlife, forest, and fisheries. The general task of defending valuable resources against attacks can be cast in the Stackelberg equilibrium model as follows. The role of the leader is taken by the defender (e.g., police forces), who commits to a strategy, such as the allocation of staff members to a patrolling schedule of locations to check. The role of the follower is played by a potential attacker, who monitors the empirical distribution (or even the entire schedule) of the strategy chosen by the defender,
and then best responds, by devising an optimal attack given this knowledge.
The crucial question is how to minimize the damage from potential threats, by computing an optimal schedule for the defender. Solving this problem in practice involves several nontrivial steps, such as estimating the payoffs of the participants for the resources involved (e.g., the attacker's reward for destroying a section of an airport) and computing the optimal strategy that the defender should commit to.
\\

In this paper, we are interested in the following fundamental question:
\begin{quote}
{\em
Given the description of a game in extensive form, compute the optimal strategy that the leader should commit to.
}
\end{quote}
We study this problem for multiple classes of two-player extensive-form games (EFGs) and
variants of the Stackelberg solution concept that differ in kinds of strategies to commit to, and provide both efficient algorithms
and computational hardness results. 
We emphasize the positive results in the main text of the submission and fully state technical hardness results in the appendix.

\subsection{Our Results}

The problem of computing a Stackelberg equilibrium in EFGs can be classified by the following parameters:
\begin{itemize}
\item \emph{Information}. Information captures how much a player knows about the opponent's moves (past and present). We study \emph{turn-based games} (TB), where for each state there is a unique player that can perform an action, and \emph{concurrent-move games} (CM), where the players act simultaneously in at least one state. 
\item \emph{Chance}. A game with chance nodes allows stochastic transitions between states; otherwise, the transitions are deterministic (made through actions of the players).
\item \emph{Graph}. We focus on \emph{trees} and \emph{directed acyclic graphs} (DAGs) as the main representations. Given such a graph, each node represents a different state in the game, while the edges represent the transitions between states. 
\item \emph{Strategies}. We study several major types of strategies that the leader can commit to, namely \emph{pure} (P), \emph{behavioral} (B), and \emph{correlated behavioral} (C).
\end{itemize}

\begin{table}[ht]
\caption{Overview of the computational complexity results containing both existing and new results provided by this paper (marked with *). Information column: TB stands for \emph{turn-based} and CM for \emph{concurrent moves}. 
Strategies: P stands for \emph{pure}, B for \emph{behavioral}, and C for \emph{correlated}. Finally, $|\calS|$ denotes the number of decision points in the game and $|\calZ|$ the number of terminal states.\label{tab:results}}{%
\begin{tabular}{|l||c|c|c|c|c|c|}
\hline  & \textsc{Information} & \textsc{Chance} & \textsc{Graph} & \textsc{Strategies} & \textsc{Complexity} & \textsc{Source}\\
\hline
\hline ~1.* & TB & \xmark & DAG & P & $O\left(|\calS| \cdot (|\calS| + |\calZ|)\right)$ & {\it Theorem~\ref{th:DAG_pure}}\\
\hline ~2. & TB & \xmark & Tree & B & $O\left(|\calS| \cdot |\calZ|^2\right)$ & \cite{Letchford2010} \\
\hline ~3.* & TB & \xmark & Tree & C & $O\left(|\calS| \cdot |\calZ|\right)$ & {\it Theorem~\ref{th:tree_corr}} \\
\hline ~4. & TB & \cmark & Tree & B & NP-hard & \cite{Letchford2010} \\
\hline ~5.* & TB & \cmark & Tree & P & FPTAS & {\it Theorem~\ref{app-pure}} \\
\hline ~6.* & TB & \cmark & Tree & B & FPTAS & {\it Theorem~\ref{app-behavior}} \\
\hline ~7.* & TB & \cmark & Tree & C & $O\left(|\calS| \cdot |\calZ|\right)$ & {\it Theorem~\ref{th:tree_corr_ch}} \\
\hline ~8.* & CM & \xmark & Tree & B & NP-hard & {\it Theorem~\ref{th:cm_tree}} \\
\hline ~9.* & CM & \cmark & Tree & C & polynomial & {\it Theorem~\ref{th:cm_tree:LP}} \\
\hline
\end{tabular}}
\end{table}

The results are summarized in Table~\ref{tab:results} and can be divided in three categories\footnote{We stated a theorem for NP-hardness for the correlated case on DAGs that was similar to the original theorem for behavioral strategies~\cite{letchford2013_thesis} in an earlier version of this paper. Due to an error, the theorem has not been correctly proven and the computational complexity for this case (i.e., computing optimal correlated strategies to commit to on DAGs) remains currently open.}.  \\

First, we design a more efficient algorithm for computing optimal strategies for turn-based games on DAGs. Compared to the previous state of the art (due to Letchford and Conitzer~\cite{Letchford2010},~\cite{letchford2013_thesis}), we reduce the complexity by a factor proportional to the number of terminal states (see row 1 in Table~\ref{tab:results}). \\

Second, we show that correlation often reduces the computational complexity of finding optimal strategies. In particular, we design several new polynomial time algorithms for computing the optimal correlated strategy to commit to for both turn-based and concurrent-move games (see rows 3, 7, 9). \\

Third, we study approximation algorithms for the NP-hard problems in this framework and provide fully polynomial time approximation schemes for finding pure and behavioral Stackelberg equilibria for turn-based games on trees with chance nodes (see rows 5, 6). We leave open the question of finding an approximation for concurrent-move games on trees without chance nodes (see row 8).

\subsection{Related Work}
There is a rich body of literature studying the problem of computing Stackelberg equilibria.
The computational complexity of the problem is known for one-shot games~\cite{Conitzer2006}, Bayesian games~\cite{Conitzer2006}, and selected subclasses of extensive-form games~\cite{Letchford2010} and infinite stochastic games~\cite{Letchford12,gupta2015,gupta2015-conf}.
Similarly, many practical algorithms are also known and typically based on solving multiple linear programs~\cite{Conitzer2006}, 
or mixed-integer linear programs for Bayesian~\cite{Paruchuri2008} and extensive-form games~\cite{bosansky2015-aaai-sse}.

For one-shot games, the problem of computing a Stackelberg equilibrium is polynomial~\cite{Conitzer2006} in contrast to the PPAD-completeness of a Nash equilibrium~\cite{Daskalakis2006,Chen2009}.
The situation changes in extensive-form games where Letchford and Conitzer showed~\cite{Letchford2010} that for many cases the problem is NP-hard, while it still remains PPAD-complete for a Nash equilibrium~\cite{Daskalakis2006b}.
More specifically, computing Stackelberg equilibria is polynomial only for:
\begin{itemize}
\item games with perfect information with no chance on DAGs where the leader commits to a pure strategy,
\item games with perfect information with no chance on trees.
\end{itemize} 
Introducing chance or imperfect information leads to NP-hardness.
However, several cases were unexplored by the existing work, namely extensive-form games with perfect information and concurrent moves. 
We address this subclass in this work.

The computational complexity can also change when the leader commits to correlated strategies.
This extension of the Stackelberg notion to correlated strategies appeared in several works~\cite{Conitzer2011,Letchford12,xu2015}.
Conitzer and Korzhyk~\cite{Conitzer2011} analyzed correlated strategies in one-shot games providing a single linear program for their computation.
Letchford~\emph{et~al.}~\cite{Letchford12} showed that the problem of finding optimal correlated strategies to commit to is NP-hard in infinite discounted stochastic games\footnote{More precisely, that work assumes that the correlated strategies can use a finite history.}.
Xu~\emph{et~al.}~\cite{xu2015} focused on using correlated strategies in a real-world security based scenario.

The detailed analysis of the impact when the leader can commit to correlated strategies has, however, not been investigated sufficiently in the existing work.
We address this extension and study the complexity for multiple subclasses of extensive-form games.
Our results show that for many cases the problem of computing Stackelberg equilibria in correlated strategies is polynomial compared to the NP-hardness in behavioral strategies.
Finally, these theoretical results have also practical algorithmic implications.
An algorithm that computes a Stackelberg equilibrium in correlated strategies can be used to compute a Stackelberg equilibrium in behavioral strategies allowing a significant speed-up in computation time~\cite{cermak2016-aaai}.

\section{Preliminaries}

We consider finite two-player sequential games.
Note that for every finite set $K$, $\Delta(K)$ denotes probability distributions over $K$ and $\calP(K)$ denotes the set of all subsets of $K$.

\begin{definition}[2-player sequential game]
A \emph{two-player sequential game} is given by a tuple $G = (\calN, \calS,\calZ,$ $\rho, \calA,$ $u, \calT, \calC)$, where:
\begin{itemize}
\item $\calN = \{1,2\}$ is a set of two players;
\item $\calS$ is a set of non-terminal states;
\item $\calZ$ is a set of terminal states;
\item $\rho : \calS \rightarrow \calP(\calN) \cup \{c\}$ is a function that defines which player(s) act in a given state, or whether the node is a chance node 
(case in which $\rho(s) = c$);
\item $\calA$ is a set of actions; we overload the notation to restrict the actions only for a single player as $\calA_i$ and for a single state as $\calA(s)$;
\item $\calT : \calS \times \prod_{i \in \rho(s)}\calA_i \rightarrow \{\calS \cup \calZ\}$ is a transition function between states depending on the actions taken by all the players that act in this state. Overloading notation, $\calT(s)$ also denotes the children of a state $s$: $\calT(s) = \{ s' \in \calS \cup \calZ \;|\; \exists a \in \calA(s) ;\; \calT(s,a) = s' \}$;
\item $\calC : \calA_c \rightarrow [0,1]$ are the chance probabilities on the edges outgoing from each chance node $s \in \calS$, such that
$\sum_{a \in \calA_c(s)}\calC(a) = 1$;  
\item Finally, $u_i : \calZ \rightarrow \mathbb{R}$ is the utility function for player $i \in \calN$.
\end{itemize}
\end{definition}

In this paper we study Stackelberg equilibria, thus player~1 will be referred to as the \emph{leader} and player~2 as the \emph{follower}.

We say that a game is \emph{turn-based} if there is a unique player acting in each state (formally, $|\rho(s)| = 1 \; \forall s \in \calS$) and with \emph{concurrent moves} if both players can act simultaneously in some state.
Moreover, the game is said to have \emph{no chance} 
if there exist no chance nodes; otherwise the game is \emph{with chance}.

A \emph{pure strategy} $\pi_i \in \Pi_i$ of a player $i \in \calN$ is an assignment of an action to play in each state of the game ($\pi_i : \calS \rightarrow \calA_i$).
A \emph{behavioral strategy} $\sigma_i \in \Sigma_i$ is a probability distribution over actions in each state $\sigma_i : \calA \rightarrow [0,1]$ such that $\forall s \in S, \forall i \in \rho(s) \; \sum_{a \in \calA_i(s)}\sigma_i(a) = 1$.

The expected utility of player $i$ given a pair of strategies $(\sigma_1, \sigma_2)$ is defined as follows:
\[
u_i(\sigma_1,\sigma_2) = \sum_{z \in \calZ}u_i(z)p_\sigma(z),
\]
where $p_\sigma(z)$ denotes the probability that leaf $z$ will be reached if both players follow the strategy from $\sigma$ and due to stochastic transitions corresponding to $\calC$.

A strategy $\sigma_i$ of player $i$ is said to represent a \emph{best response} to the opponent's strategy $\sigma_{-i}$ if $u_i(\sigma_i,\sigma_{-i}) \geq u_i(\sigma'_i,\sigma_{-i}) \; \forall \sigma'_i \in \Sigma_i$. 
Denote  by$\calBR(\sigma_{-i}) \subseteq \Pi_i$ the set of all the pure best responses of player $i$ to strategy $\sigma_{-i}$.
We can now introduce formally the Stackelberg Equilibrium solution concept:

\begin{definition}[Stackelberg Equilibrium]
A strategy profile $\sigma = (\sigma_1,\sigma_2)$ is a \emph{Stackelberg Equilibrium} if $\sigma_1$ is an optimal strategy of the leader given that the follower best-responds to its choice. Formally, a Stackelberg equilibrium in {\em pure} strategies is defined as
$$(\sigma_1,\sigma_2) = \argmax_{\sigma'_1 \in \Pi_1, \sigma'_2 \in \calBR(\sigma'_1)} u_1(\sigma'_1,\sigma'_2)$$
while a Stackelberg equilibrium in {\em behavioral} strategies is defined as
$$(\sigma_1,\sigma_2) = \argmax_{\sigma'_1 \in \Sigma_1, \sigma'_2 \in \calBR(\sigma'_1)} u_1(\sigma'_1,\sigma'_2)$$
\end{definition}

Next, we describe the notion of a Stackelberg equilibrium where the leader can commit to a correlated strategy in a sequential game. The concept was suggested and investigated by Letchford {\em et al.} \cite{Letchford12}, but no formal definition exists. Formalizing such a definition below, we observe that the definition is essentially the ``Stackelberg analogue'' of the notion of \emph{Extensive-Form Correlated Equilibria (EFCE)}, introduced by von Stengel and Forges~\cite{vonStengel08}. This parallel turns out to be technically relevant as well.
\begin{definition}[Stackelberg Extensive-Form Correlated Equilibrium]
A probability distribution $\phi$ on pure strategy profiles $\Pi$ is called a \emph{Stackelberg Extensive-Form Correlated Equilibrium (SEFCE)} if it maximizes the leader's utility (that is, $\phi = \argmax_{\phi' \in \Delta(\Pi)}u_1(\phi')$) subject to the constraint that
whenever the play reaches a state $s$ where the follower can act, the follower is recommended an action $a$ according to $\phi$ such that the follower cannot gain by unilaterally deviating from $a$ in state $s$ (and possibly in all succeeding states), given the posterior on the probability distribution of the strategy of the leader, defined by the actions taken by the leader so far.
\end{definition}

We give an example to illustrate both variants of the Stackelberg solution concept. 

\begin{example}
Consider the game in Figure~\ref{fig:gt}, where the follower moves first (in states $s_1, s_2$) and the leader second (in states $s_3, s_4$).
By committing to a behavioral strategy, the leader can gain utility $1$ in the optimal case -- leader commits to play \emph{left} in state $s_3$ and \emph{right} in $s_4$.
The follower will then prefer playing \emph{right} in $s_2$ and \emph{left} in $s_1$, reaching the leaf with utilities $(1,3)$.
Note that the leader cannot gain more by committing to strictly mixed behavioral strategies. 

Now, consider the case when the leader commits to correlated strategies.
We interpret the probability distribution over strategy profiles $\phi$ as signals send to the follower in each node where the follower acts, while the leader is committing to play with respect to $\phi$ and the signals sent to the follower.
This can be shown in node $s_2$, where the leader sends one of two signals to the follower, each with probability $0.5$. 
In the first case, the follower receives the signal to move \emph{left}, while the leader commits to play uniform strategy in $s_3$ and action \emph{left} in $s_4$ reaching the utility value $(2,1)$ if the follower plays according to the signal.
In the second case, the follower receives the signal to move \emph{right}, while the leader commits to play \emph{right} in $s_4$ and \emph{left} in $s_3$ reaching the utility value $(1,3)$ if the follower plays according to the signal.
By using this correlation, the leader is able to get the utility of $1.5$, while ensuring the utility of $2$ for the follower; hence, the follower will follow the only recommendation in node $s_1$ to play \emph{left}.

\begin{figure}[t]
\centering
\includegraphics[width=0.5\textwidth]{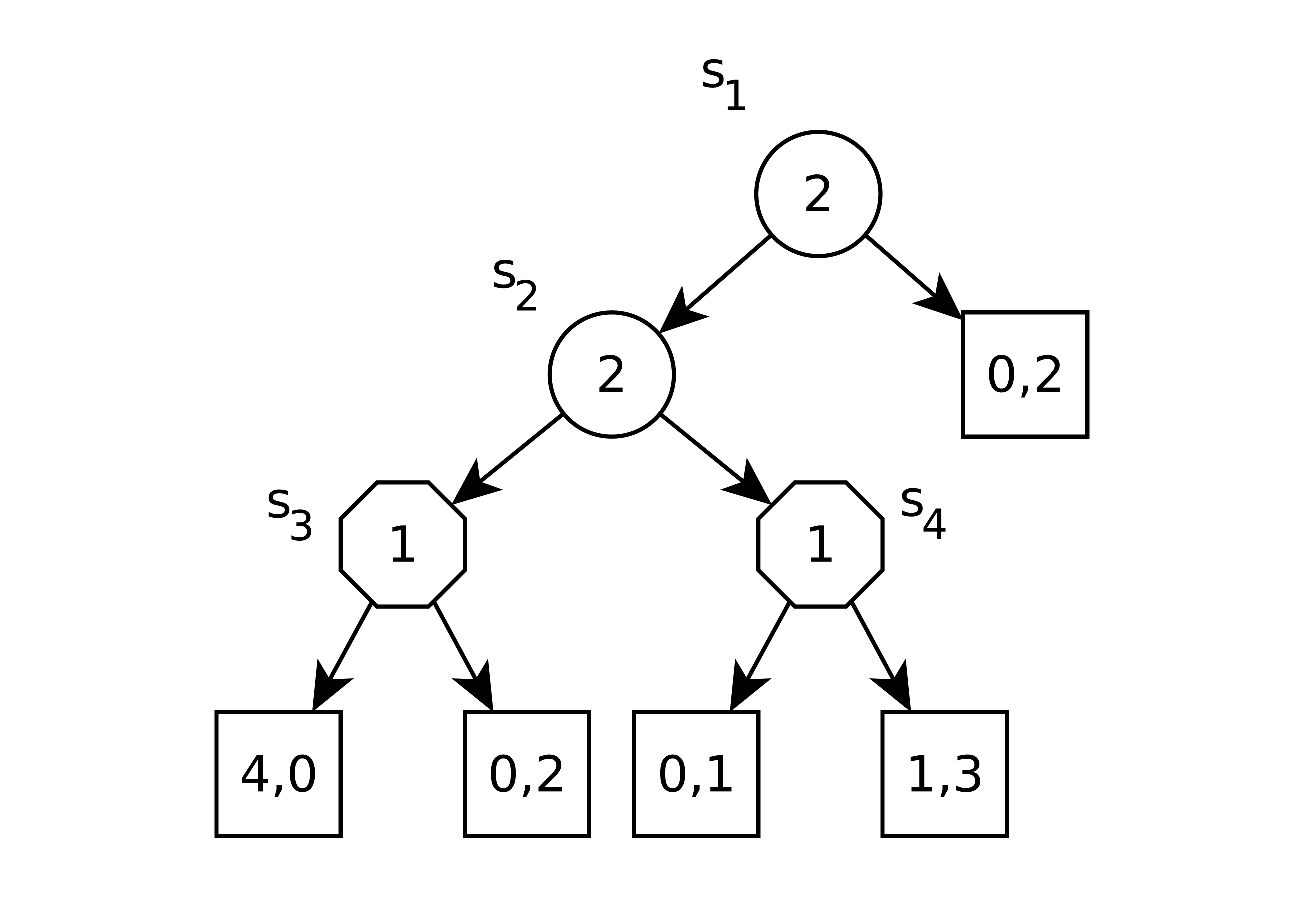}
\includegraphics[width=0.32\textwidth]{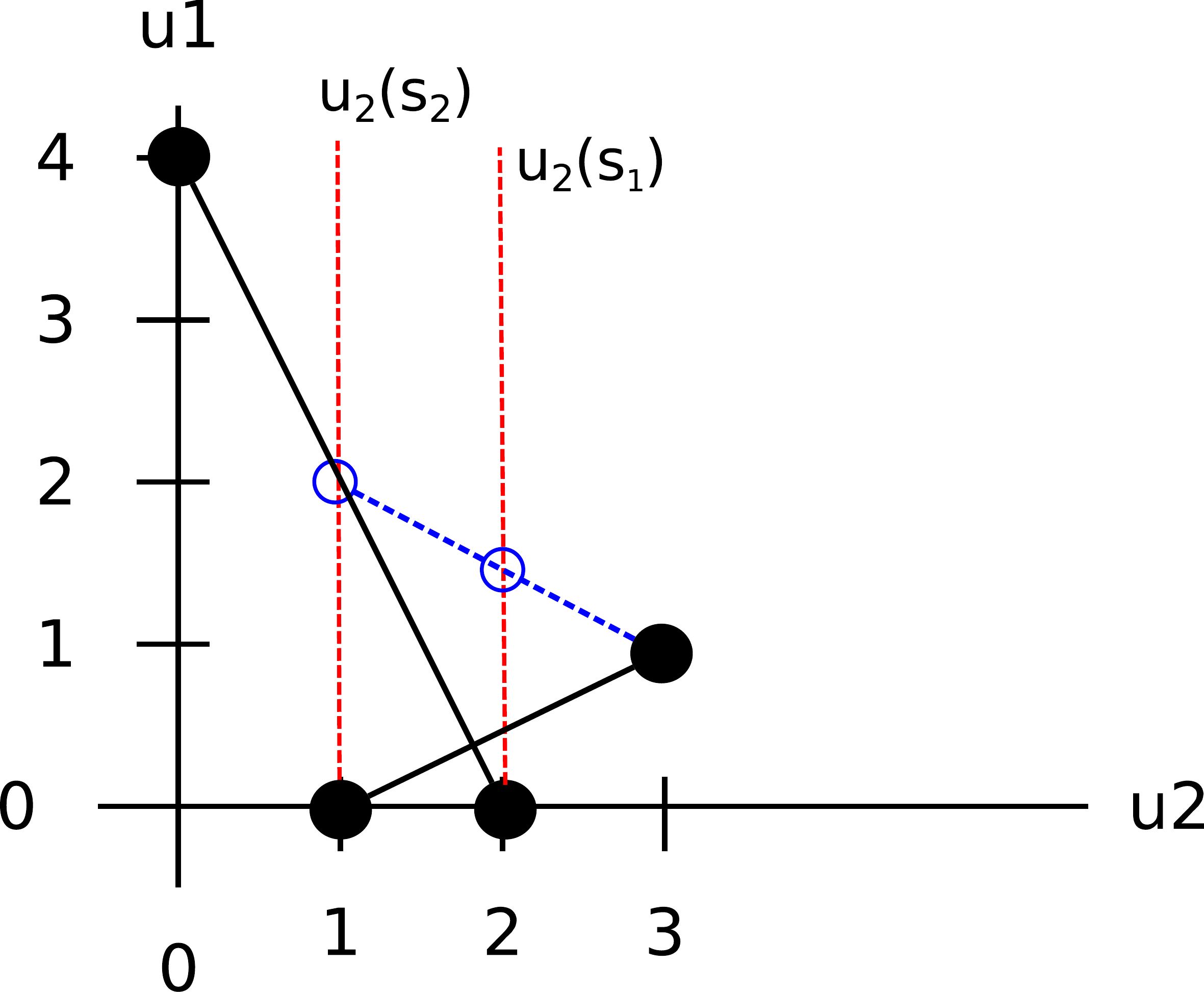}
\caption{
(Left) An example game with different outcomes depending on whether the leader commits to behavioral or to correlated strategies. The leader acts in nodes $s_3$ and $s_4$, the follower acts in nodes $s_1$ and $s_2$.
Utility values are shown in the terminal states, first value is the utility for the leader, second value is the utility of the follower.
(Right) A visualization of the outcomes of the example game in the two-dimensional utility space of the players -- the horizontal axis corresponds to the utility of the follower, the vertical axis corresponds to the utility of the leader. Red vertical lines visualize the minimal value the follower can guarantee in $s_2$ or $s_1$ respectively, blue lines and points correspond to new outcomes that can be achieved if the leader commits to correlated strategies
\label{fig:gt}}
\end{figure}

The situation can be visualized using a two-dimensional space, where the $x$-axis represents the utility of the follower and the $y$-axis represents the utility of the leader.
This type of visualization was also used in \cite{Letchford2010} and we use it further in the proof of Theorem~\ref{th:tree_corr}.
While the black nodes correspond to the utility points of the leafs, the solid black lines correspond to outcomes when the leader randomize between the leafs.
The follower plays a best-response action in each node; hence, in order to force the follower to play action \emph{left} in $s_2$, the leader must guarantee the follower the utility of at least $1$ in the sub-game rooted in node $s_3$ since the follower can get at least this value by playing \emph{right} in $s_2$.
Therefore, each state of the follower restricts the set of possible outcomes of the game. 
These restrictions are visualized as the vertical dashed lines -- one corresponds to the described situation in node $s_2$, and the second one due to the leaf following node $s_1$.
Considering only commitments to behavioral strategies, the best of all possible outcomes for the leader is the point $(u_2=3,u_1=1)$.
With correlation, however, the leader can achieve a mixture of points $(u_2=1,u_1=2)$ and $(u_2=3,u_1=1)$ (the blue dashed line).
This can also be interpreted as forming a convex hull over all possible outcomes in the sub-tree rooted in node $s_2$. 
Note, that without correlation, the set of all possible outcomes is not generally a convex set.
Finally, after restricting this set of possible solutions due to leaf in node $s_1$, the intersection point $(u_2=2,u_1=1.5)$ represents the expected utility for the Stackelberg Extensive-Form Correlated Equilibrium solution concept. 
\end{example}


The example gives an intuition about the structure of the probability distribution $\phi$ in SEFCE.
In each state of the follower, the leader sends a signal to the follower and commits to follow the correlated strategy if the follower admits the recommendation, while simultaneously committing to punish the follower for each deviation.
This punishment is simply a strategy that minimizes the follower's utility and will be useful in many proofs; next we introduce some notation for it.

Let $\sigma^m$ denote a behavioral strategy profile, where in each sub-game the leader plays a minmax behavior strategy based on the utilities of the follower and the follower plays a best response.
Moreover, for each state $s \in \calS$, we denote by $\mu(s)$ the expected utility of the follower in the sub-game rooted in state $s$ if both players play according to~$\sigma^m$ (i.e., the value of the corresponding zero-sum sub-game defined by the utilities of the follower).

Note that being a probability distribution over pure strategy profiles, a SEFCE is, a priori, an object of exponential size in the size of the description of the game, when it is described as a tree. This has to be dealt with before we can consider computing it. The following lemma gives a compact representation of the correlated strategies in a SEFCE and the proof yields an algorithm for constructing the probability distribution $\phi$ from the compact representation. It is this compact representation that we seek to compute.

\begin{lemma}\label{lem:compactSEFCE}
For any turn-based or concurrent-move game in tree form, there exists a SEFCE $\phi \in \Delta(\Pi)$ that can be compactly represented as a behavioral strategy profile $\sigma = (\sigma_1,\sigma_2)$ such that $\forall z \in \calZ \; p_{\phi}(z) = p_{\sigma}(z)$ and $\phi$ corresponds to the following behavior:
\begin{itemize}
\item the follower receives signals in each state $s$ according to $\sigma_2(a)$ for each action $a \in \calA_2(s)$
\item the leader chooses the action in each state $s$ according to $\sigma_1(a)$ for each action $a \in \calA_1(s)$ if the state $s$ was reached by following the recommendations
\item both players switch to the minmax strategy $\sigma^m$ after a deviation by the follower. 
\end{itemize}
\end{lemma}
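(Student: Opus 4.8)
The plan is to take an arbitrary SEFCE $\phi^*$ --- which exists because $u_1$ is linear over the compact simplex $\Delta(\Pi)$ and the follower incentive constraints cut out a closed, nonempty region (the minmax profile $\sigma^m$ together with any best-response recommendations is feasible) --- and to reshape it into the claimed canonical form without changing the leader's value or breaking any incentive constraint. The two ingredients are replacing the off-path continuation by the minmax punishment $\sigma^m$, and summarizing the on-path recommendations by node-wise behavioral strategies; the verification is carried out node by node, from the leaves upward.

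The first step is to argue that the post-deviation behavior may be taken to be $\sigma^m$ with no loss. In any realization drawn from a SEFCE the follower obeys every recommendation, so the on-path leaf distribution, and hence $u_1$, is insensitive to what is specified after a follower deviation. By the definition of $\mu$, the quantity $\mu\bigl(\calT(t,b)\bigr)$ is the least follower continuation value enforceable after a deviation to $b$ at a follower node $t$, so overwriting whatever $\phi^*$ prescribed there by $\sigma^m$ can only decrease the follower's deviation value. Every follower incentive constraint valid under $\phi^*$ therefore remains valid, and the leader's value is untouched.

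The second step extracts the behavioral profile. Writing $R(s)$ for the probability that $\phi^*$ reaches $s$, I set $\sigma(a\mid s)=R\bigl(\calT(s,a)\bigr)/R(s)$ at every on-support node and let $\sigma$ equal $\sigma^m$ off the support. Because a tree gives each leaf a unique path, the product of these conditionals along that path telescopes to $R(z)=p_{\phi^*}(z)$, so the reconstructed $\phi$ --- draw recommendations node-wise from $\sigma$, obey them on path, revert to $\sigma^m$ after any follower deviation --- satisfies $p_\phi(z)=p_\sigma(z)$ for every $z$. The crucial observation is that the follower's compliance value at a node $t$ is the conditional expectation of its utility over the leaves below $t$, and this conditional distribution is identical under $\phi^*$ and under $\sigma$; thus the compliance side of each incentive constraint is preserved while, by the first step, the deviation side is only relaxed. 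Hence $\phi$ is again a SEFCE of the same value, yielding the desired compact representation $\sigma$.

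The step I expect to be the main obstacle is the concurrent-move node, where the child $\calT\bigl(s,(a_1,a_2)\bigr)$ depends on both players' simultaneous actions, so the on-path joint action distribution need not factor and the conditional $R\bigl(\calT(s,\cdot)\bigr)/R(s)$ need not split into $\sigma_1(\cdot\mid s)\,\sigma_2(\cdot\mid s)$. To handle this I would lean on the belief structure in the SEFCE definition: at such a node the leader's current action is not among ``the actions taken by the leader so far'', so the follower must evaluate its recommendation against the leader's marginal $\sigma_1(\cdot\mid s)$ alone. Any recommended follower action is then forced to be a best response to that marginal, and correlating the signal with the leader's simultaneous move yields no extra enforceable value for the leader; the optimum is attained by the product $\sigma_1(\cdot\mid s)\,\sigma_2(\cdot\mid s)$, which is exactly the factorization the behavioral representation needs. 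The turn-based case is the degenerate instance in which one factor is a point mass, so once the concurrent factorization is established the telescoping identity and the incentive bookkeeping of the first two steps go through uniformly.
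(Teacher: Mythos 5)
Your first two steps are sound and are essentially the paper's own proof: the paper also starts from an arbitrary SEFCE $\phi'$, defines $\sigma(a)$ as the conditional reach probability $\sum_{\pi'\in\Pi(s,a)}\phi'(\pi')/\phi'(s)$, rebuilds $\phi$ by playing $\sigma$ on path and $\sigma^m$ after any deviation, and argues that compliance values are preserved while deviation values only drop to $\mu(\calT(s,a'))$; your compactness argument for existence is extra but harmless. The genuine gap is exactly where you predicted it, and your proposed fix does not work. The claim that at a concurrent node the follower ``must evaluate its recommendation against the leader's marginal $\sigma_1(\cdot\mid s)$ alone,'' so that correlating the signal with the leader's simultaneous move ``yields no extra enforceable value,'' is false, and it contradicts the paper's own LP for concurrent-move games: constraint (8) of Theorem~\ref{th:cm_tree:LP} weights the deviation payoffs $\mu(\calT(s,a_1\times a_2'))$ by the \emph{joint} probabilities $\delta(\calT(s,a_1\times a_2))$, i.e., by the posterior on the leader's simultaneous action induced by the recommendation. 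Correlation at a concurrent node changes both sides of the incentive constraint, and it can strictly enlarge the set of enforceable (leader value, follower value) pairs.

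Concretely: let the root be a follower node with action \emph{out} leading to a leaf with payoffs $(0,1)$ and action \emph{in} leading to a concurrent node $s$ with leader actions $\{A,B\}$, follower actions $\{x,y\}$, and leaves $(A,x)=(0,2)$, $(B,x)=(0,-10)$, $(A,y)=(0,-10)$, $(B,y)=(3,0)$. The correlated scheme that recommends \emph{in} and then plays $(A,x)$ and $(B,y)$ each with probability $1/2$ is incentive compatible (given signal $x$ the posterior puts the leader on $A$, so compliance gives $2$ versus $-10$ for deviating; given $y$, $0$ versus $-10$; at the root the follower's expected value is $1\geq 1$) and earns the leader $3/2$. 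Under any product profile $(p_A,p_B)\times(q_x,q_y)$ at $s$, the follower's value is at most $q_x(2p_A-10p_B)$, so the root constraint forces $2p_A-10p_B\geq 1$, i.e.\ $p_A\geq 11/12$; but then recommendation $y$ is unenforceable (it needs $10p_B\geq 12p_A$), hence $q_y=0$ and the leader earns $0$, since every $x$-leaf gives the leader $0$. So the unique optimal on-path distribution at $s$ is genuinely correlated, and no factorization argument can succeed; the correct compact object at concurrent nodes is the joint reach probability $\delta(\calT(s,a_1\times a_2))$ together with the minmax-punishment convention --- exactly the LP variables --- rather than an independent behavioral profile. (You are in good company: the paper's own correctness paragraph simply asserts $p_{\phi'}=p_\sigma$, which fails in the same way at correlated concurrent nodes, so the lemma as literally stated should be read with correlated on-path behavior for the concurrent case. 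Your argument is complete only for the turn-based case, where a single player moves at each state and the chain rule gives the telescoping identity.)
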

\begin{proof}
Let $\phi'$ be a SEFCE. 
We construct the behavioral strategy profile $\sigma$ from $\phi'$ and then show how an optimal strategy $\phi$ can be constructed from $\sigma$ and $\sigma^m$.

To construct $\sigma$, it is sufficient to specify a probability $\sigma(a)$ for each action $a \in \calA(s)$ in each state $s$.
We use the probability of state $s$ being reached (denoted $\phi'(s)$) that corresponds to the sum of pure strategy profiles $\phi'(\pi)$ such that the actions in strategy profile $\pi$ allow state $s$ to be reached.

Formally, there exists a sequence $s_0,a_0,\ldots,a_{k-1},s_k$ of states and actions (starting at the root), such that for every $j = 0, \ldots, k-1$ it holds that $a_j = \pi(s_j)$, $s_{j+1} = \calT(s_j,a_j)$ (or $s_{j+1}$ is the next decision node of some player if $\calT(s_j,a_j)$ is a chance node), $s_0 = s_{root}$, and $s_k = s$.
Let $\Pi(s)$ denote a set of pure strategy profiles for which such a sequence exists for state $s$, and $\Pi(s,a) \subseteq \Pi(s)$ the strategy profiles that not only reach $s$, but also prescribe action $a$ to be played in state $s$. 
We have:
$$\sigma(a) = \frac{\sum_{\pi' \in \Pi(s,a)}\phi'(\pi')}{\phi'(s)}, \mbox{ where } \phi'(s) = \sum_{\pi' \in \Pi(s)}\phi'(\pi')$$
In case $\phi'(s) = 0$, we set the behavior strategy in $\sigma$ arbitrarily.

Next, we construct a strategy $\phi$ that corresponds to the desired behavior and show that it is indeed an optimal SEFCE strategy.
We need to specify a probability for every pure strategy profile $\pi = (\pi_1,\pi_2)$.
Consider the sequence of states and actions that corresponds to executing the actions from the strategy profile $\pi$.
Let $s^l_0,a^l_0,\ldots,a^l_{k_l-1},s_{k_l}$ be one of $q$ possible sequences of states and actions (there can be multiple such sequences due to chance nodes), such that $j = 0, \ldots, k_l-1$, $a^l_j = \pi(s^l_j)$, $s^l_{j+1} = \calT(s^l_j,a^l_j)$ (or $s^l_{j+1}$ is one of the next decision nodes of some player immediately following the chance node(s) $\calT(s^l_j,a^l_j)$), $s^l_0 = s_{root}$, and $s^l_{k_l} \in \calZ$.
The probability for the strategy profile $\pi$ corresponds to the probability of executing the sequences of actions multiplied by the probability that the remaining actions prescribe minmax strategy $\sigma^m$ in case the follower deviates:
\[
\phi(\pi) = \left(\prod_{l=1}^{q}\prod_{j=0}^{k_l-1}\sigma(a^l_j)\right) \cdot \prod_{a' = \pi(s') | s' \in \calS \setminus \{s^1_0,\ldots,s^1_{k_0-1},s^2_0,\ldots,s^q_{k_q-1}\}}\sigma^m(a').
\]

\paragraph{Correctness}
By construction of $\sigma$ and $\phi$, it holds that probability distribution over leafs remains the same as in $\phi'$; hence, $\forall z \in \calZ \; p_{\phi'}(z) = p_{\sigma}(z) = p_{\phi}(z)$ and thus the expected utility of $\phi$ for the players is the same as in $\phi'$.

Second, we have to show that the follower has no incentive to deviate from the recommendations in $\phi$.
By deviating to some action $a'$ in state $s$, the follower gains $\mu(\calT(s,a'))$, since both players play according to $\sigma^m$ after a deviation.
In $\phi'$, the follower can get for the same deviation at best some utility value $v_2(\calT(s,a'))$, which by the definition of the minmax strategies $\sigma^m$ is greater or equal than $\mu(\calT(s,a'))$.
Since the expected utility of the follower for following the recommendations is the same in $\phi$ as in $\phi'$, and the follower has no incentive to deviate in $\phi'$ because of the optimality, the follower has no incentive to deviate in $\phi$ either.
\end{proof}

\section{Computing Exact Strategies in Turn-Based Games}\label{sec:TB}
We start our computational investigation with turn-based games. 

\begin{theorem}\label{th:DAG_pure}
There is an algorithm that takes as input a turn-based game in DAG form with no chance nodes and outputs a Stackelberg equilibrium in pure strategies. The algorithm runs in time $O(|\calS|(|\calS| + |\calZ|))$.
\end{theorem}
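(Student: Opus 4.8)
The plan is to reduce the problem to two linear-time passes over the DAG. Since the leader commits to a pure strategy, transitions are deterministic, and there are no chance nodes, fixing the leader's commitment $\pi_1$ together with any follower best response $\pi_2 \in \calBR(\pi_1)$ determines a single path from the root to a terminal state $z$. Hence the Stackelberg value equals $\max\{u_1(z) : z \text{ is \emph{inducible}}\}$, where a leaf $z$ is inducible if the leader has a pure commitment under which some follower best response reaches $z$. The whole task is therefore to (i) compute the punishment values $\mu(\cdot)$ that quantify the threats available to the leader, and (ii) characterize and efficiently test inducibility.

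First I would compute $\mu(s)$ for every state by a single bottom-up pass in reverse topological order (possible since the game is a DAG): set $\mu(z) = u_2(z)$ for $z \in \calZ$, set $\mu(s) = \min_{s' \in \calT(s)} \mu(s')$ when $\rho(s) = \{1\}$ (the leader minimizes the follower's payoff), and $\mu(s) = \max_{s' \in \calT(s)} \mu(s')$ when $\rho(s) = \{2\}$. This is exactly the minmax value of the follower in the subgame rooted at $s$ when the leader punishes, as introduced just before Lemma~\ref{lem:compactSEFCE}, and it is the worst continuation the leader can credibly threaten after an off-path follower deviation.

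The core step is the inducibility characterization. I claim that a leaf $z$ is inducible if and only if there is a root-to-$z$ path $s_0, s_1, \dots, s_k = z$ such that at every follower node $s_i$ on the path and every off-path child $s'' \in \calT(s_i) \setminus \{s_{i+1}\}$ we have $\mu(s'') \le u_2(z)$: the leader plays the path actions along the path and the punishing (argmin) action everywhere else, so that staying on the path yields the follower exactly $u_2(z)$ while any deviation into an off-path subgame yields at most its minmax value $\mu(s'')$. To test this efficiently, define $g(s)$ as the minimum over all root-to-$s$ paths of the largest off-path threat encountered at follower nodes of the path, and compute it top-down in topological order by the bottleneck relaxation $g(s) = \min_{p\,:\,s \in \calT(p)} \max\bigl(g(p),\, c(p,s)\bigr)$ with $g(s_{\mathrm{root}}) = -\infty$, where $c(p,s) = \max_{s' \in \calT(p) \setminus \{s\}} \mu(s')$ if $p$ is a follower node and $c(p,s) = -\infty$ if $p$ is a leader node (leader nodes impose no incentive constraint). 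A leaf $z$ is then inducible precisely when $g(z) \le u_2(z)$, and the algorithm returns $\max\{u_1(z) : g(z) \le u_2(z)\}$, recovering the commitment by retracing the minimizing path and filling in punishing actions elsewhere.

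For the running time, both passes touch each edge a constant number of times; computing the $c(p,s)$ at a fixed follower node needs only the largest and second-largest child value, so the total work is $O(|\calS| + |\calZ| + E)$, where $E = \sum_{s \in \calS} |\calT(s)| \le |\calS|(|\calS| + |\calZ|)$ since each of the $|\calS|$ non-terminal nodes has at most $|\calS| + |\calZ|$ children; this yields the claimed bound. The main obstacle is proving the ``only if'' and credibility directions of the characterization in the DAG setting rather than in a tree: because distinct subgames can share descendants, an off-path deviation subgame may reconnect to the intended path, so one cannot simply argue subgame-by-subgame as for trees. I would handle this by showing, via induction in topological order, that under the leader's combined path-plus-punish commitment the follower's best-response value from any reachable node is bounded by the larger of $u_2(z)$ and the relevant $\mu(\cdot)$, so that reconnection to the path only ever returns the follower the value $u_2(z)$ and never lets it exceed the threat bound; together with breaking the follower's ties in favor of the leader, this certifies that $z$ is actually reached.
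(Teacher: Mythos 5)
Your proposal is correct and is essentially the paper's own proof: your threat values $\mu(\cdot)$ and your bottleneck quantity $g(\cdot)$ coincide exactly with the paper's minmax pass and its ``capacity'' $\gamma(\cdot)$ (same recurrence, same initialization $g(s_{\mathrm{root}})=-\infty$, same inducibility test $u_2(z)\geq\gamma(z)$, same two-largest-siblings trick for the running time). If anything, your topological-order induction handling off-path subgames that reconnect to the intended path supplies a correctness detail the paper's write-up leaves implicit.
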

\begin{proof}
Our algorithm performs three passes through all the nodes in the graph. 

First, the algorithm computes the minmax values $\mu(s)$ of the follower for each node in the game by backward induction.

Second, the algorithm computes a {\em capacity} for each state in order to determine which states of the game are reachable (i.e., there exists a commitment of the leader and a best response of the follower such that the state can be reached by following their strategies). 
The capacity of state~$s$, denoted $\gamma(s)$, is defined as the minimum utility of the follower that needs to be guaranteed by the outcome of the sub-game starting in state $s$ in order to make this state reachable. 
By convention $\gamma(s_{root}) = -\infty$ and we initially set $\gamma(\calS \cup \calZ \setminus \{s_{root}\}) = \infty$ and mark them as open.


Third, the algorithm evaluates each open state $s$, for which all parents have been marked as closed.
We distinguish whether the leader, or the follower makes the decision:
\begin{itemize}
\item \emph{$s$ is a leader node}: the algorithm sets $\gamma(s') = \min(\gamma(s'),\gamma(s))$ for all children $s' \in \calT(s)$;
\item \emph{$s$ is a follower node}: the algorithm sets $\gamma(s') = \min(\gamma(s'),\max(\gamma(s),\max_{s'' \in \calT(s) \setminus \{ s'\} }\mu(s'')))$ for all children $s' \in \calT(s)$.
\end{itemize}
Finally, we mark state $s$ as closed.

We say that leaf $z \in \calZ$ is a {\em possible outcome}, if $\mu(z) = u_2(z) \geq \gamma(z)$.
Now, the solution is such a possible outcome that maximizes the utility of the leader, i.e.
$
\argmax_{z \in \calZ \; u_2(z) \geq \gamma(z)}u_1(z)
$.
The strategy is now constructed by following nodes from leaf $z$ back to the root while using nodes $s'$ with capacities $\gamma(s') \leq \mu(z)$. 
Due to the construction of capacities, such a path exists and forms a part of the Stackelberg strategy.
The leader commits to the strategy leading to $\max\min$ utility for the follower in the remaining states that are not part of this path.

\paragraph{Complexity Analysis}
Computing the $\max\min$ values can be done in $O(|\calS|(|\calS| + |\calZ|))$ by backward induction due to the fact the graph is a DAG. 
In the second pass, the algorithm solves the widest-path problem from a single source to all leafs.
In each node, the algorithm calculates capacities for every child.
In nodes where the leader acts, there is a constant-time operation performed for each child.
However, we need to be more careful in nodes where the follower acts.
For each child $s' \in \calT(s)$ the algorithm computes a maximum value $\mu(s')$ of all of the siblings. 
We can do this efficiently by computing two maximal values of $\mu(s')$ for all $s' \in \calT(s)$ (say $s^1, s^2$) and for each child then the term $\max_{s'' \in \calT(s) \setminus \{ s'\} }\mu(s'')$ equals either to $s^1$ if $s' \neq s^1$, or to $s^2$ if $s' = s^1$.
Therefore, the second pass can again be done in $O(|\calS|(|\calS| + |\calZ|))$.
Finally, finding the optimal outcome and constructing the optimal strategy is again at most linear in the size of the graph.
Therefore the algorithm takes at most $O(|\calS|(|\calS| + |\calZ|))$ steps.
\end{proof} ~\\

Next we provide an algorithm for computing a Stackelberg extensive-form correlated equilibrium for turn-based games with no chance nodes.

\begin{theorem}\label{th:tree_corr}
There is an algorithm that takes as input a turn-based game in tree form with no chance nodes and outputs an SEFCE in the compact representation. The algorithm runs in time $O(|\calS||\calZ|)$.
\end{theorem}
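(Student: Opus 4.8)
The plan is to compute, for every node $s$ of the tree, a compact description of the upper frontier of outcomes achievable by an SEFCE in the subgame rooted at $s$. Concretely, I would define $f_s(t)$ to be the maximum leader utility attainable in the subtree of $s$ by a correlated strategy that is incentive compatible for the follower inside the subtree and that guarantees the follower expected utility exactly $t$. This is precisely the upper boundary of the convex hull of achievable $(u_2,u_1)$ points described in the Example and Figure~\ref{fig:gt}; because correlation lets the leader take arbitrary convex combinations of continuations, $f_s$ is a piecewise-linear concave function of $t$ on its feasible interval. As a preprocessing step I would compute the follower minmax values $\mu(s)$ for all nodes by backward induction ($\mu(z)=u_2(z)$ at leaves, $\mu(s)=\max_{a}\mu(\calT(s,a))$ at follower nodes, $\mu(s)=\min_a \mu(\calT(s,a))$ at leader nodes), since these values determine the follower's payoff after a punished deviation.

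I would then build the functions $f_s$ bottom-up. At a leaf $z$, $f_z$ is the single point $(u_2(z),u_1(z))$. At a leader node there is no incentive constraint, so the leader may correlate freely over children and $f_s$ is the upper concave envelope of the children's frontiers (equivalently, the upper boundary of the convex hull of the union of their achievable regions). At a follower node $s$ with children $s_i=\calT(s,a_i)$, if the follower is recommended $a_i$ then, by Lemma~\ref{lem:compactSEFCE}, any deviation triggers the minmax strategy $\sigma^m$, so the follower's best deviation is worth $d_i=\max_{j\neq i}\mu(s_j)$; hence recommending $a_i$ is admissible only for continuations giving the follower at least $d_i$. I would therefore truncate each child frontier to the domain $t\ge d_i$ (discarding child $i$ entirely if $d_i$ exceeds its maximal feasible $t$) and set $f_s$ to be the upper concave envelope of these truncated frontiers; the thresholds $d_i=\max_{s''\in\calT(s)\setminus\{s_i\}}\mu(s'')$ are exactly the per-child quantities appearing in Theorem~\ref{th:DAG_pure}. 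The leader's SEFCE value is then $\max_t f_{s_{root}}(t)$.

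To output the compact representation of Lemma~\ref{lem:compactSEFCE}, I would trace the optimum downward. Any point on a concave frontier is a convex combination of at most two vertices coming from the children's (truncated) frontiers; resolving this combination at a leader node yields the leader's mixing probabilities $\sigma_1$ over children, and at a follower node it yields the recommendation probabilities $\sigma_2$, after which I recurse into each selected child with its target follower utility. Off the recommended paths the profile is set to $\sigma^m$, and Lemma~\ref{lem:compactSEFCE} certifies that the resulting behavioral profile $\sigma$ encodes a genuine SEFCE attaining the computed value.

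For the running time, the key structural fact is that the frontier of a subtree containing $\ell$ leaves is concave with at most $\ell$ breakpoints, so each $f_s$ has at most $|\calZ|$ vertices. Maintaining every frontier as a list of vertices sorted by follower utility, the envelope at a node can be computed by merging its children's already-sorted vertex lists and running a single linear convex-hull sweep, at cost linear in the number of incoming vertices; summing over the tree gives $O(|\calS||\calZ|)$. The main obstacle I anticipate is the correctness argument at follower nodes: I must verify that imposing the purely local constraint $t\ge d_i$ at each follower node, composed up the tree, reproduces exactly the global follower incentive constraints of an SEFCE, and that the concave hull produced is not merely an upper bound but is actually realized by the reconstructed profile $\sigma$ via Lemma~\ref{lem:compactSEFCE}. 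A secondary technical point is organizing the envelope merges so that the per-node cost stays linear rather than incurring a logarithmic factor from multiway merging, which is what pins the bound at $O(|\calS||\calZ|)$.
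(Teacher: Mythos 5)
Your proposal follows essentially the same approach as the paper's proof: a bottom-up pass maintaining, for each node, the convex frontier of achievable (follower-utility, leader-utility) pairs---hulls at leader nodes, and truncation of each child's frontier at follower nodes by the siblings' punishment values (your threshold $d_i = \max_{j\neq i}\mu(s_j)$ coincides with the paper's $\max_{w'\neq w}\min_{p'\in H_{w'}} u_2(p')$, since the minimum incentive-compatible follower payoff in a subgame equals its minmax value)---followed by a downward pass that decomposes the optimal point into convex combinations of at most two child points to recover the compact strategy, with the same sorted-merge, linear-hull complexity argument giving $O(|\calS||\calZ|)$. The only cosmetic difference is that you keep just the upper concave envelope rather than the full hull boundary, which is sound because the objective and every constraint involve only the follower-utility coordinate.
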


\begin{proof}
We improve the algorithm from the proof of Theorem 4 in \cite{Letchford2010}.
The algorithm contains two steps: (1) a bottom-up dynamic program that for each node $s$ computes the set of possible outcomes, (2) a downward pass constructing the optimal correlated strategy in the compact representation.

For each node $s$ we keep set of points $H_s$ in two-dimensional space, where the $x$-dimension represents the utility of the follower and the $y$-dimension represents the utility of the leader.
These points define the convex set of all possible outcomes of the sub-game rooted in node $s$ (we assume that $H_s$ contains only the points on the boundary of the convex hull).
We keep each set $H_s$ sorted by polar angle.

\paragraph{Upward pass}
In leaf $z \in \calZ$, we set $H_z = \{z\}$.
In nodes $s$ where the leader acts, the set of points $H_s$ is equal to the convex hull of the corresponding sets of the children $H_w$. That is, $H_s = \mbox{\rm Conv}(\cup_{w \in {\cal T}(s)} H_w)$.

In nodes $s$ where the follower acts, the algorithm performs two steps.
First, the algorithm removes from each set $H_w$ of child $w$ the outcomes from which the follower has an incentive to deviate.
To do this, the algorithm uses the maxmin $u_2$ values of all other children of $s$ except $w$ and creates a new set $\hat{H}_w$ that we call the \emph{restricted set}.
The restricted set $\hat{H}_w$ is defined as an intersection of the convex set representing all possible outcomes $H_w$ and all outcomes defined by the halfspace restricting the utility $x$ of the follower by the inequality:
$$ x \geq \max_{w' \in \calT(s); w'\neq w}\;\min_{p' \in H_{w'}} u_2(p').$$

Second, the algorithm computes the set $H_s$ by creating a convex hull of the corresponding restricted sets $\hat{H}_w$ of the children $w$. That is,
$H_s = \mbox{\rm Conv}(\cup_{w \in \calT(s)} \hat{H}_w).$

Finally, in  the root of the game tree, the outcome of the Stackelberg Extensive-Form Correlated Equilibrium is the point with maximal payoff of player 1: $ p_{SE} = \argmax_{p \in H_{s_{root}}} u_1(p).$

\paragraph{Downward pass}
We now construct the compact representation of commitment to correlated strategies that ensures the outcome $p_{SE}$ calculated in the upward pass.
The method for determining the optimal strategy in each node is similar to the method $\textsf{strategy}(s,p'')$ used in the proof of Theorem 4 in \cite{Letchford2010}.

Given a node $s$ and a point $p''$ that lies on the boundary of $H_s$, this method specifies how to commit to correlated strategies in the sub-tree rooted in node $s$.
Moreover, the proof in~\cite{Letchford2010} also showed that it is sufficient to consider mixtures of at most two actions in each node and allowing correlated strategies does violate their proof. We consider separately leader and follower nodes: \\

$\bullet$ \hspace{0.2mm} For each node $s$ where the leader acts, the algorithm needs to find two points $p,p'$ in the boundaries of children $H_{w}$ and $H_{w'}$, such that the desired point $p''$ is a convex combination of $p \in H_{w}$ and $p' \in H_{w'}$. 
If $w = w'$, then the strategy in node $s$ is to commit to pure strategy leading to node $w$.
If $w \neq w'$, then the strategy to commit to in node $s$ is a mixture: with probability $\alpha$ to play action leading to $w$ and with probability $(1-\alpha)$ to play action leading to $w'$, where $\alpha \in [0,1]$ is such that $p'' = \alpha p + (1-\alpha)p'$.
Finally, for every child $s' \in \calT(s)$ we call the method $\textsf{strategy}$ with appropriate $p$ (or $p'$) in case $s' = w$ (or $w'$), and with the threat value corresponding to $\mu(s')$ for every other child. \\

$\bullet$ \hspace{0.2mm}
 For each node $s$ where the follower acts, the algorithm again needs to find two points $p,p'$ in the \emph{restricted} boundaries of children $\hat{H}_{w}$ and $\hat{H}_{w'}$, such that the desired point $p''$ is a convex combination of $p \in \hat{H}_{w}$ and $p' \in \hat{H}_{w'}$.
The reason for using the restricted sets is because the follower must not have an incentive to deviate from the recommendation.

Similarly to the previous case, if $w = w'$, then the correlated strategy in node $s$ is to send the follower signal leading to node $w$ while committing further to play $\textsf{strategy}(w,p)$ in sub-tree rooted in node $w$, and to play the minmax strategy in every other child $s'$ corresponding to value $\mu(s')$.

If $w \neq w'$, then there is a mixture of possible signals: with probability $\alpha$ the follower receives a signal to play the action leading to $w$ and with probability $(1-\alpha)$ signal to play the action leading to $w'$, where $\alpha \in [0,1]$ is again such that $p'' = \alpha p + (1-\alpha)p'$.
As before, by sending the signal to play certain action, the leader commits to play method $\textsf{strategy}(w,p)$ (or $\textsf{strategy}(w',p')$) in sub-tree rooted in node $w$ (or $w'$) and committing to play the minmax strategy leading to value $\mu(s')$ for every other child $s'$.

\paragraph{Correctness}
Due to the construction of the set of points $H_s$ that are maintained for each node $s$, these points correspond to the convex hull of all possible outcomes in the sub-game rooted in node $s$.
In leafs, the algorithm adds the point corresponding to the leaf.
In the leader's nodes, the algorithm creates a convex combinations of all possible outcomes in the children of the node.
The only places where the algorithm removes some outcomes from these sets are nodes of the follower.
If a point is removed from $H_w$ in node $s$, there exists an action of the follower in $s$ that guarantees the follower a strictly better expected payoff than the expected payoff of the outcome that correspond to the removed point.
Therefore, such an outcome is not possible as the follower will have an incentive to deviate.
The outcome selected in the root node is the possible outcome that maximizes the payoff of the leader of all possible outcomes; hence, it is optimal for the leader.
Finally, the downward pass constructs the compact representation of the optimal correlated strategy to commit to that reaches the optimal outcome.

\paragraph{Complexity Analysis}
Computing boundary of the convex hull $H_s$ takes $O(|\calZ|)$ time in each level of the game tree since the children sets $H_w$ are already sorted~\cite[p.~6]{MarkyMark2000}.
Moreover, since we keep only nodes on the boundary of the convex hull, the inequality $\sum_{s \in \calS} |H_s| \leq |\calZ|$ for all nodes in a single level of the game tree also bounds the number of lines that need to be checked in the downward pass.
Therefore, each pass takes at most $O(|\calS||\calZ|)$ time. 
\end{proof}

Interestingly, the algorithm described in the proof of Theorem~\ref{th:tree_corr} can be modified also in cases where the game contains chance, as shown in the next theorem.
This is in contrast to computing a Stackelberg equilibria that is NP-hard with chance.

\begin{theorem}\label{th:tree_corr_ch}
There is an algorithm that takes as input a turn-based game in tree form with chance nodes and outputs the compact form of an SEFCE for the game. The algorithm runs in time $O(|\calS||\calZ|)$.
\end{theorem}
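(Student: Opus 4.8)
The plan is to reuse the two-pass dynamic program of Theorem~\ref{th:tree_corr} essentially verbatim, adding only a third case for chance nodes. The leaf, leader, and follower rules (and the choice of the root outcome $\argmax_{p \in H_{s_{root}}} u_1(p)$) remain unchanged, as does the punishment machinery: the minmax value $\mu(s)$ of the follower is already well defined for subtrees containing chance (it is the value of the corresponding zero-sum subgame, computable by backward induction with chance expectations), and Lemma~\ref{lem:compactSEFCE} already accounts for chance nodes along the reaching sequences. So the whole task reduces to specifying (i) how $H_s$ is formed at a chance node in the upward pass, and (ii) how a target point is decomposed at a chance node in the downward pass, both within the same $O(|\calZ|)$-per-level budget.

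\emph{Upward pass at a chance node.} The key geometric claim is that the set of achievable outcome points for the subtree rooted at a chance node $s$ is the weighted Minkowski sum of the children's sets,
\[
H_s = \bigoplus_{a \in \calA_c(s)} \calC(a)\, H_{\calT(s,a)} = \Big\{ \textstyle\sum_{a} \calC(a)\, p_a \;:\; p_a \in H_{\calT(s,a)} \Big\}.
\]
The justification is that the chance outcome is publicly observed, so the leader may commit to an independent correlated strategy in each branch; the resulting expected outcome is exactly the $\calC$-weighted average of the per-branch outcomes, and each branch outcome ranges independently over $H_{\calT(s,a)}$. This set is convex, and its boundary is obtained by scaling each child polygon by $\calC(a)$ and merging the edge vectors in polar-angle order, which costs time linear in the total number of child vertices. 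Because the Minkowski sum of convex polygons has vertex count at most the sum of the summands' vertex counts, the additive bound $|H_s| \le \sum_{a} |H_{\calT(s,a)}|$ holds, exactly as for the convex-hull union at leader nodes. Hence the level-wise bound $\sum_{s}|H_s| \le |\calZ|$ is preserved and the overall running time stays $O(|\calS||\calZ|)$.

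\emph{Downward pass at a chance node.} Given the target boundary point $p''$ of $H_s$, I would recover a decomposition $p'' = \sum_{a} \calC(a)\, p_a$ with $p_a \in H_{\calT(s,a)}$ and then recurse into each branch with target $p_a$; note that, unlike leader and follower nodes which mix at most two actions, a chance node contributes exactly one target point per branch, and because chance involves no decision there is no incentive constraint and hence no restricted set to apply. The decomposition is read off from the same angle-sorted merge used in the upward pass: every edge of $H_s$ is inherited from a unique edge of a single scaled summand, so walking the boundary of $H_s$ to the edge containing $p''$ simultaneously identifies, for each child, the vertex or edge of $\calC(a)\, H_{\calT(s,a)}$ extreme in the corresponding supporting direction, from which the summands $p_a$ are determined.

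\emph{Main obstacle.} I expect the crux to be the correctness of the Minkowski-sum characterization together with incentive compatibility across chance nodes, rather than the (standard) geometry. Specifically, one must argue that allowing the leader to correlate its commitments across distinct chance branches yields nothing beyond independent per-branch optimization, and that a follower observing a chance move forms no posterior belief creating a deviation incentive not already excluded by the restricted sets at the follower nodes above and below $s$; equivalently, that the minmax threat $\mu$ composes additively through chance, so that the quantity $\min_{p \in H_{w}} u_2(p)$ used in the follower restriction still equals $\mu(w)$ when $w$ is, or lies above, a chance node. Once these compositional facts are in place, correctness follows from the same argument as in Theorem~\ref{th:tree_corr}, with the chance case slotted in alongside the leader case.
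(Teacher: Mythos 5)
Your proposal takes essentially the same approach as the paper: the paper's proof likewise reuses Theorem~\ref{th:tree_corr} and handles chance nodes in the upward pass by taking the (probability-weighted) Minkowski sum of the children's convex sets, computed in linear time because the boundaries are angle-sorted planar polygons, with the size of each $H_s$ still bounded by the number of leaves. The downward-pass decomposition at chance nodes and the compositional facts about $\mu$ that you flag as the crux are left implicit in the paper's (very terse) proof, and your treatment of them is correct and consistent with its argument.
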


\begin{proof}
We can use the proof from Theorem~\ref{th:tree_corr}, but need to analyze what happens in chance nodes in the upward pass.
The algorithm computes in chance nodes the Minkowski sum of all convex sets in child nodes and since all sets are sorted and this is a planar case, this operation can be again performed in linear time~\cite[p.~279]{MarkyMark2000}.
The size of set $H_s$ is again bounded by the number of all leafs~\cite{gritzmann93}. 
\end{proof}


\section{Computing Exact Strategies in Concurrent-Move Games}\label{sec:CM}

Next we analyze concurrent-move games and show that while the problem of computing a Stackelberg equilibrium in behavior strategies is NP-hard (even without chance nodes), the problem of computing a Stackelberg extensive-form correlated equilibrium can be solved in polynomial time.

\begin{theorem}\label{th:cm_tree}
Given a concurrent-move games in tree form with no chance nodes and a number $\alpha$,
 it is NP-hard to decide if the leader achieves payoff at least $\alpha$ in a Stackelberg equilibrium in behavior strategies.
\end{theorem}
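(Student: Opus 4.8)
The plan is to prove NP-hardness by a polynomial-time reduction from \textsc{Knapsack} (decision version): given items $1,\dots,n$ with nonnegative integer weights $w_i$ and values $v_i$, a capacity $W$, and a target $V$, decide whether some $S\subseteq\{1,\dots,n\}$ satisfies $\sum_{i\in S}w_i\le W$ and $\sum_{i\in S}v_i\ge V$. This mirrors the Letchford--Conitzer hardness for turn-based trees \emph{with chance} (row~4 of Table~\ref{tab:results}); the key conceptual point is that a concurrent-move node lets us dispense with genuine chance. At a concurrent node the follower must choose his action \emph{simultaneously} with the leader, so he sees only the leader's committed behavioral strategy and not its realized action. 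Thus a leader who commits to a mixed action there creates, from the follower's viewpoint, a lottery whose probabilities the leader controls but whose realization is hidden --- exactly the role played by a chance node in the turn-based reduction. The first step is therefore to design a concurrent \emph{lottery gadget} that forces the leader, through the follower's best-response incentives, to behave like a prescribed (leader-parameterized) chance node.

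Second, I would assemble these gadgets into a game tree with one decision layer per item. Each item $i$ is represented by a gadget in which the leader's commitment encodes the binary choice ``$i\in S$'' versus ``$i\notin S$''. The leader's accumulated payoff is arranged to track $\sum_{i\in S}v_i$, while the follower's threat values $\mu(\cdot)$ (the minmax utilities introduced above) are calibrated so that the follower is willing to follow the leader's intended play only when the encoded set is feasible, i.e.\ $\sum_{i\in S}w_i\le W$; any overweight commitment triggers a follower deviation that collapses the leader's payoff. Concretely, each ``include $i$'' branch charges the follower an incentive cost proportional to $w_i$ and rewards the leader by $v_i$, and a single global follower node compares the cumulative cost against the value guaranteed by the capacity-$W$ outside option. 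Setting the threshold $\alpha$ appropriately (as a function of $V$ and the gadget constants) makes ``leader payoff $\ge\alpha$'' equivalent to the existence of a feasible subset of value at least $V$.

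Third comes correctness. For the forward direction, from any feasible high-value subset $S$ I would exhibit the behavioral commitment that plays each item gadget in the ``clean'' way dictated by $S$; feasibility guarantees that no follower deviation is profitable (each deviation yields at most the corresponding $\mu$-value by construction), so the follower best-responds as intended and the leader collects at least $\alpha$. For the converse, given any leader commitment achieving payoff $\ge\alpha$ together with a best-responding follower, I would read off a subset $S$ from the leader's per-gadget mixing and argue that the follower's non-deviation constraints force $\sum_{i\in S}w_i\le W$ while the payoff bound forces $\sum_{i\in S}v_i\ge V$, yielding a \textsc{Knapsack} solution. Since the construction is clearly polynomial in $n$ and in the numbers $w_i,v_i,W,V$, this establishes NP-hardness.

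The hard part will be the gadget design in two respects. First, I must ensure \emph{integrality}: a priori the leader may commit to a genuinely mixed action in an item gadget, corresponding to a fractional choice rather than a clean $0/1$ decision, so the gadget payoffs and the follower's threat structure must be shaped so that the optimal commitment is attained at an extreme (pure-per-gadget) point, or so that the fractional relaxation attains its optimum only at integral solutions. Second, I must reconcile the \emph{multiplicative} behaviour of behavioral-strategy reach probabilities along a tree path with the \emph{additive} nature of the Knapsack weight and value sums; I expect to handle this with a parallel/branching layout in which the item contributions enter the leader's objective and the follower's incentive inequalities as expectations (hence additively) rather than as a single multiplied-through path, using the minmax values $\mu(\cdot)$ as the mechanism that converts the follower's simultaneous best response into the exact linear weight constraint. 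Verifying that these two requirements can be met simultaneously, with all constants polynomially bounded, is the crux of the proof.
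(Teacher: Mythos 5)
Your outline follows the same route as the paper's own proof: the paper reduces from \textsc{Knapsack}, uses a concurrent root node in which a large constant $M$ forces the leader's commitment to be a uniform lottery --- exactly your ``lottery gadget'' replacing genuine chance --- and attaches one concurrent item gadget $\calI_i$ per item, where the leader's pure choice $\oplus/\ominus$ encodes $i\in S$, and where the follower's root action $f_0$ (accept) versus the outside options $f_k$ plays the role of your ``single global follower node,'' comparing the follower's cost $-\sum_{i\in S}w_i$ against the guaranteed $-W$. Both difficulties you flag (integrality of the per-gadget commitment, and converting multiplicative reach probabilities into additive sums via a forced lottery) are real and are precisely what the paper's two claims resolve.

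However, there is a genuine gap at exactly the point you defer: you never construct the mechanism that forces the leader's lottery to be the prescribed one, and for plain \textsc{Knapsack} the converse direction fails as stated. The danger is that the leader's optimal commitment tilts the root lottery by small perturbations $\eps_i$ toward high-value items; such a commitment corresponds to no integral subset, so ``reading off $S$ from the per-gadget mixing'' does not directly yield a \textsc{Knapsack} solution. The paper kills these tilts with two ingredients missing from your plan. First, the outside option $f_k$ pays the follower $\eps_k NM-W$, so over-weighting item $k$ risks a follower deviation (and leader payoff $0$) once $\eps_k$ is non-negligible --- this handles the case $\sum_{i\in S}w_i\geq W$. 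Second, and this is the key missing idea, the paper does not reduce from plain \textsc{Knapsack} but from a padded variant, \textsc{Knapsack with unit-items} (at least $W$ items of weight and value $1$; NP-complete by padding). This matters in the slack case $\sum_{i\in S}w_i<W$: there a tilted lottery keeps the follower's payoff above $-W$, so no deviation is triggered, and one must instead argue that tilting cannot be \emph{optimal}. With unit items available, any slack admits an integral solution worth $1+\sum_{i\in S}v_i$, so an optimal tilted commitment would have to gain at least $1$ from the tilt, forcing $\eps_k\geq 1/(N^2\max_i v_i)$, which by the choice of $M$ is large enough to trigger the follower's deviation to $f_k$ after all. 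Without this padding (or an equivalent device), extracting an integral feasible $S$ of value at least $V$ from an arbitrary optimal commitment does not go through, and this quantitative argument --- not the assembly of gadgets --- is the crux of the proof.
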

The proof for the above hardness result above is included in the appendix Section~\ref{sec:a2}; the proof uses a reduction from the NP-complete problem \textsc{Knapsack}.

\begin{theorem} \label{th:cm_tree:LP}
For a concurrent-move games in tree form, the compact form of an SEFCE for the game can be found in polynomial time by solving a single linear program.
\end{theorem}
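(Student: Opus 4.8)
The plan is to write a single linear program whose variables are \emph{realization probabilities} of the correlation device, in the spirit of the sequence form, and to encode the follower's no-deviation requirement as linear constraints using precomputed minmax punishment values. Before setting up the program I would first compute, by backward induction, the value $\mu(s)$ of the zero-sum subgame rooted at each node $s$ (the follower maximizing its own utility, the leader minimizing it); at a concurrent node this amounts to solving one matrix game, which is itself a small linear program, so all the $\mu(s)$ are obtained in polynomial time. These are exactly the continuation utilities the follower faces after the leader detects a deviation and switches to $\sigma^m$, as in Lemma~\ref{lem:compactSEFCE}.

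For the LP itself I would introduce, for every decision node $s$ and every joint action profile $(a_1,a_2)\in\calA_1(s)\times\calA_2(s)$, a variable $x(s,a_1,a_2)\ge 0$ representing the probability that play reaches $s$ along the recommended path and the device draws $(a_1,a_2)$ there; writing $y(s)=\sum_{a_1,a_2}x(s,a_1,a_2)$ for the reach probability of $s$, the consistency (flow) constraints are $y(s_{root})=1$ together with, for each non-root node, the requirement that its reach probability equals the mass entering it from its unique parent edge, with the usual multiplication by the transition probability at chance nodes. The objective is to maximize the leader's payoff $\sum_{z\in\calZ}u_1(z)\,y(z)$. This already captures every valid correlated play together with its leaf distribution, matching the $p_\sigma(z)$ of the compact representation.

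The crucial part is the follower's incentive constraints. Fix a node $s$ where the follower acts and a recommended action $a_2$. Conditioned on reaching $s$ and receiving signal $a_2$, the follower's posterior over the leader's \emph{simultaneous} action $a_1$ is proportional to $x(s,a_1,a_2)$. Obeying the recommendation yields the correlated continuation utility, whereas any deviation $a_2'$ is punished and yields at most $\mu(\calT(s,a_1,a_2'))$ from each resulting child. Writing $F(s')$ for the auxiliary (linear) reach-weighted follower utility of the subtree at $s'$ — defined by $F(z)=u_2(z)\,y(z)$ at leaves and propagated additively (with chance weights) up the tree — the no-deviation requirement, after multiplying through by the nonnegative normalizer $\sum_{a_1}x(s,a_1,a_2)$ to clear the posterior denominator, becomes the \emph{linear} inequality
\[
\sum_{a_1}F\bigl(\calT(s,a_1,a_2)\bigr)\;\ge\;\sum_{a_1}x(s,a_1,a_2)\,\mu\bigl(\calT(s,a_1,a_2')\bigr)
\]
for every alternative action $a_2'$. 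The observation that makes this work is that the product of the unnormalized reach probability and the conditional continuation value is exactly the linear quantity $F$, so the posterior weighting disappears and no bilinear terms survive.

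Finally I would argue correctness and complexity. Any feasible point yields joint action distributions $x(s,a_1,a_2)/y(s)$ at each reachable node, and together with the minmax punishment off the recommended path these define a correlated strategy of the form in Lemma~\ref{lem:compactSEFCE}; the incentive inequalities guarantee the follower never gains by deviating, so every feasible point is the compact representation of a valid commitment, and conversely every SEFCE induces a feasible point with the same leaf distribution. Hence the LP optimum equals the SEFCE value and its optimal solution is the desired compact form. The program has $O(\sum_s|\calA_1(s)||\calA_2(s)|)$ variables and $O(\sum_s|\calA_2(s)|^2)$ incentive constraints besides the flow constraints, all polynomial in the size of the game, and it is solved once. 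The main obstacle is this incentive step: one must correctly model that the follower deviates under a posterior over the leader's simultaneous move and that the leader can enforce precisely $\mu$ after detecting the deviation, and then verify that the reach-weighting linearizes these constraints — establishing via the minmax property that $\mu$ is simultaneously achievable by the punishing leader and the best the deviating follower can secure is the delicate point.
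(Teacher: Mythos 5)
Your proposal is correct and follows essentially the same route as the paper: a single LP over reach probabilities (your $x(s,a_1,a_2)$ and $y(s)$ are the paper's $\delta$ on child states), auxiliary reach-weighted follower utilities (your $F$ is the paper's $v_2$), precomputed minmax punishment values $\mu(s)$, and incentive constraints linearized by clearing the posterior normalizer---which is exactly the paper's constraint $\sum_{a_1} v_2(\calT(s,a_1\times a_2)) \geq \sum_{a_1}\delta(\calT(s,a_1\times a_2))\,\mu(\calT(s,a_1\times a_2'))$. No substantive differences.
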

\begin{proof}
We construct a linear program (LP) based on the LP for computing Extensive-Form Correlated Equilibria (EFCE)~\cite{vonStengel08}.
We use the compact representation of SEFCE strategies (described by Lemma~\ref{lem:compactSEFCE}) represented by variables $\delta(s)$ that denote a joint probability that state $s$ is reached when both players, and chance, play according to SEFCE strategies.

The size of the original EFCE LP---both the number of variables and constraints---is quadratic in the number of sequences of players.
However, the LP for EFCE is defined for a more general class of imperfect-information games without chance.
In our case, we can exploit the specific structure of a concurrent-move game and together with the Stackelberg assumption reduce the number of constraints and variables.

First, the deviation from a recommended strategy causes the game to reach a different sub-game in which the strategy of the leader can be chosen (almost) independently to the sub-game that follows the recommendation. 

Second, the strategy that the leader should play according to the deviations is a minmax strategy, with which the leader punishes the follower by minimizing the utility of the follower as much as possible.
Thus, by deviating to action $a'$ in state $s$, the follower can get at best the minmax value of the sub-game starting in node $\calT(s,a')$ that we denote as $\mu(\calT(s,a'))$.
The values $\mu(s)$ for each state $s \in \calS$ can be computed beforehand using backward induction. \\

The linear program is as follows.
\begin{small}
\begin{eqnarray}
& & \max_{\delta,v_2}  \; \; \sum_{z \in \calZ}\delta(z)u_1(z) \qquad\\
\mbox{subject to:} \; \; \; \; \; \; \; \; \; \; \delta(s_{root}) & = & 1 \label{eq:cmlp:root}\\
0 \geq \delta(s) & \geq & 1 \qquad\qquad\qquad\;\; \forall s\in \calS\ \label{eq:cmlp:nf1}\\
\delta(s) & = & \sum_{s' \in \calT(s)}\delta(s') \qquad \forall s\in \calS;\; \rho(s) = \{1,2\} \label{eq:cmlp:nf2}\\
\delta(\calT(s,a_c)) & = & \delta(s)\calC(s,a_c) \qquad \forall s\in \calS\;\forall a\in \calA_c(s); \rho(s) = \{c\} \label{eq:cmlp:nf3}\\
v_2(z) & = & u_2(z)\delta(z) \qquad\quad \forall z \in \calZ \label{eq:cmlp:v1}\\
v_2(s) & = & \sum_{s' \in \calT(s)}v_2(s') \quad\;\; \forall s \in \calS \label{eq:cmlp:v2}\\
\sum_{a_1 \in \calA_1(s)}v_2(\calT(s,a_1\times a_2)) & \geq & \sum_{a_1 \in \calA_1(s)}\delta(\calT(s,a_1\times a_2))\mu(\calT(s,a_1\times a'_2)) \nonumber \\
& & \qquad\qquad\qquad\quad \forall s \in \calS \; \forall a_2,a_2' \in \calA_2(s) \label{eq:cmlp:br}
\end{eqnarray}
\end{small}
The interpretation is as follows. Variables $\delta$ represent the compact form of the correlated strategies. 

Equation~(\ref{eq:cmlp:root}) ensures that the probability of reaching the root state is~$1$, while Equation~(\ref{eq:cmlp:nf1}) ensures that for each state $s$, we have $\delta(s)$ between $0$ and $1$. \\

\emph{Network-flow constraints}: the probability of reaching a state equals the sum of probabilities of reaching all possible children (Equation~(\ref{eq:cmlp:nf2})) and it must correspond with the probability of actions in chance nodes (Equation~(\ref{eq:cmlp:nf3})).
The objective function ensures that the LP finds a correlated strategy that maximizes the leader's utility. \\

\emph{The follower has no incentive to deviate from the recommendations given by $\delta$}:
To this end, variables $v_2(s)$ represent the expected payoff for the follower in a sub-game rooted in node $s \in \calS$ when played according to $\delta$; defined by Equations~(\ref{eq:cmlp:v1}-\ref{eq:cmlp:v2}).
Each action that is recommended by $\delta$ must guarantee the follower at least the utility she gets by deviating from the recommendation.
This is ensured by Equation~(\ref{eq:cmlp:br}), where the expected utility for recommended action $a_2$ is expressed by the left side of the constraint, while the expected utility for deviating is expressed by the right side of the constraint.

Note that the expected utility on the right hand side of Equation~(\ref{eq:cmlp:br}) is calculated by considering the posterior probability after receiving the recommendation $a_2$ and the minmax values of children states after playing $a_2'$; $\mu(\calT(s,a_1\times a'_2))$.

Therefore, the variables $\delta$ found by solving this linear program correspond to the compact representation of the optimal SEFCE strategy.
\end{proof}

\section{Approximating Optimal Strategies}\label{sec:fptas}
In this section, we describe fully polynomial time approximation schemes for
finding a Stackelberg equilibrium in behavioral strategies as well as in pure strategies for turn based games on trees with chance
nodes. 

We start with the problem computing behavioral strategies for turn-based games on trees with chance nodes.

\begin{theorem}\label{app-behavior}
	There is an algorithm that takes as input a turn-based game on a tree
	with chance nodes and a parameter $\epsilon$, and computes a behavioral strategy for the leader. That strategy, combined with some best response of the follower, achieves a payoff that differs by at most $\epsilon$ from the payoff of the leader in a Stackelberg equilibrium in behavioral strategies. The algorithm runs in time $O(\epsilon^{-3}(UH_T)^3T)$, where
	$U=\max_{\sigma,\sigma'}u_1(\sigma)-u_1(\sigma')$, $T$ is the size of the
	game tree and $H_T$ is its height.
\end{theorem}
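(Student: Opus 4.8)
The plan is to design a discretized dynamic program over the game tree, where the leader's behavioral strategy at each node is approximated by rounding the follower's attainable payoffs to a polynomial-size grid. The key structural fact I would exploit is the same one underlying Theorem~\ref{th:tree_corr}: for a turn-based game, the set of achievable outcomes in a subtree, parametrized by the payoff guaranteed to the follower, is governed by a one-dimensional trade-off curve. For behavioral strategies (as opposed to correlated ones) this set is not convex, so I cannot maintain an exact convex hull; instead I would maintain, for each node $s$, a function $f_s(x)$ giving the best leader payoff achievable in the subtree rooted at $s$ subject to guaranteeing the follower at least utility $x$, and I would tabulate this function only at grid points of a discretized follower-payoff axis.

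\emph{First} I would fix the grid. Let $U=\max_{\sigma,\sigma'}u_1(\sigma)-u_1(\sigma')$ bound the spread of leader payoffs and choose a discretization step of order $\epsilon/H_T$, so that the accumulated rounding error over the at most $H_T$ levels of the tree stays below $\epsilon$. The number of grid points is then $O(UH_T/\epsilon)$. \emph{Second}, I would specify the bottom-up recurrence for $f_s$ at the grid points. At a leaf $z$ the function is a single point $(u_2(z),u_1(z))$. At a chance node the child functions are combined by a weighted convolution-style operation using the probabilities $\calC$, rounding the resulting follower-payoff coordinates back to the grid. At a leader node the leader may either pick a single child's action or, to approximate behavioral mixing, interpolate between grid entries of the children. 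At a follower node I would enforce the incentive constraint exactly as in Theorem~\ref{th:tree_corr}: an action leading to child $w$ is admissible only if the follower's guaranteed payoff there is at least the threat value $\max_{w'\neq w}\mu(w')$, using the minmax values $\mu(\cdot)$ precomputed by backward induction. \emph{Third}, at the root I would read off the best leader payoff over all admissible grid entries and reconstruct the behavioral strategy by tracing the recurrence back down, and then argue that some best response of the follower attains a leader payoff within $\epsilon$ of the optimum.

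\emph{The main obstacle} I anticipate is controlling the approximation error across chance nodes without the convexity that made the correlated case clean. In a leader or follower subtree, the true optimal behavioral strategy may mix in a way whose follower-payoff value does not land on the grid, and at a chance node these rounding errors from several children are averaged together with the chance weights; I must show that rounding each child's follower-payoff coordinate to the nearest grid point perturbs the follower's best-response decision only within the discretization tolerance, so that the strategy recovered by the discretized program remains (approximately) incentive-compatible and its error does not compound beyond an $O(\epsilon)$ total over the $H_T$ levels. The second delicate point is the complexity bookkeeping: each of the $T$ nodes processes tables of size $O(UH_T/\epsilon)$, and the combination step at internal nodes costs roughly the square of the table size (pairing grid entries across children or across a two-action mixture), which yields the claimed $O(\epsilon^{-3}(UH_T)^3 T)$ running time once the per-node work and the number of grid points are multiplied out.
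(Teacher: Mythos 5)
There is a genuine gap, and it lies exactly at the point you flag as ``the main obstacle'': you discretize the \emph{follower's} payoff axis and store the leader's payoff as a function of it, whereas a correct argument must do the reverse. The difficulty is that the leader's achievable payoff is not a continuous (let alone Lipschitz) function of the follower-payoff guarantee, and the follower's incentive constraints are typically \emph{tight} at the Stackelberg optimum: the leader gives the follower exactly the threat value $\max_{w'\neq w}\mu(w')$ and not a drop more. If follower payoffs are rounded to a grid of step $\epsilon/H_T$, then at such a tight constraint you face a dilemma. If you round conservatively (only admit grid entries that certifiably meet the threat), you may exclude the optimum and every near-optimal strategy, and the nearest admissible grid solution can cost the leader up to $U$ (the trade-off curve can be arbitrarily steep), not $O(\epsilon)$. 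If instead you admit entries that meet the threat only up to the grid tolerance, the output strategy is merely ``approximately incentive-compatible,'' and the follower's \emph{actual} best response can then deviate to a branch where the leader's payoff collapses; the theorem requires the guarantee to hold against a true best response, and best-response behavior is discontinuous in the follower's utility, so no bound of the form ``the decision is perturbed only within the discretization tolerance'' can be proven. This is precisely why the paper's proof emphasizes that \emph{only the leader's utility is discretized}: its table $A_T$ is indexed by (scaled, rounded) leader utility $k$ and stores the \emph{exact} follower utility $A_T[k]$ that can be offered, so all incentive comparisons against the exact minmax values $\mu(\cdot)$ are performed without error, and the rounding shows up only as an additive loss of at most $1$ per level (hence $H_T$ overall, i.e.\ $\epsilon$ after scaling) in the leader's payoff --- a quantity the leader's objective \emph{is} Lipschitz in, trivially.

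A secondary symptom of the same inversion is your complexity bookkeeping: your grid lives on the follower axis but you size it by $U=\max_{\sigma,\sigma'}u_1(\sigma)-u_1(\sigma')$, which is the spread of \emph{leader} payoffs; on your approach the table size would have to scale with the follower's payoff spread, which the theorem's statement does not reference at all. If you swap the roles --- index tables by discretized leader utility, store exact follower utilities, handle leader and chance nodes by maximizing $p\,A_L[i]+(1-p)A_R[j]$ subject to $pi+(1-p)j\geq k$, and handle follower nodes by exact comparison with $\mu$ --- the rest of your outline (bottom-up tables, threat values by backward induction, $O(n^3)$ work per node with $n=O(UH_T/\epsilon)$, trace-back for the strategy) matches the paper's proof and goes through.
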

\begin{proof}
The exact version of this problem was shown to be NP-hard by Letchford and
Conitzer~\cite{Letchford2010}. Their hardness proof was a reduction from
{\sc Knapsack} and our algorithm is closely related to the classical approximation
scheme for this problem. We present here the algorithm, and delegate the proof of correctness to the appendix. 

Our scheme uses dynamic programming to construct a table of values for each
node in the tree. Each table contains a discretized representation of the
possible tradeoffs between the
utility that the leader can get and the utility that can at the same time be 
offered to the follower. In the appendix, we show that the cumulative error in the leaders 
utility is bounded additively by the height of the tree. This error only depends
on the height of the tree and not the utility. By an initial scaling of the
leader utility by a factor $D$, the error can be made arbitrarily small, at the
cost of extra computation time. This
scaling is equivalent to discretizing the leaders payoff to multiples of some
small $\delta=1/D$. For simplicity, we only describe the scheme for binary trees,
since nodes with higher branching factor can be replaced by small equivalent
binary trees.

An important property is that only the leader's utility is discretized, since we
need to be able to reason correctly about the follower's actions. The tables
are indexed by the leader's utility and contains values that are the follower's
utility. More formally, for each sub-tree $T$ we will compute a table $A_T$ with the
following guarantee for each index $k$ in each table:
\begin{enumerate}[a)]
\item the leader has a strategy for the game tree $T$ that offers the
follower utility $A_T[k]$ while securing utility {\em at least} $k$ to
the leader.
\item no strategy of the leader can (starting from sub-tree $T$) offer the
follower utility {\em strictly more} than $A_T[k]$, while securing 
utility {\em at least} $k+H_T$ to the leader, where $H_T$ is the 
height of the tree $T$.
\end{enumerate}
This also serves as our induction hypothesis for proving correctness. For
commitment to pure strategies, a similar table is used with the same guarantee,
except quantifying over pure strategies instead.

We will now examine each type of node, and for each show how the table is 
constructed. For each node $T$, we let $L$ and $R$ denote the two successors (if
any), and we let $A_T$, $A_L$, and $A_R$ denote their respective tables. Each
table will have $n=H_TU/\epsilon$ entries.

\paragraph{\bf If $T$ is a leaf} with utility $(u_1,u_2)$, the table can be
filled directly from the definition:
\begin{align*}
A_T[k] &:= \left\{\begin{matrix}u_2&\text{, if }k\leq u_1\\-\infty&\text{, otherwise}\end{matrix}\right.
\end{align*}
Both parts of the induction hypothesis are trivially satisfied by this.

\paragraph{\bf If $T$ is a leader node,} and the leader plays $L$ with
probability $p$, followed up by the strategies that gave the guarantees for
$A_L[i]$ and $A_R[j]$, then the leader would get an expected
$pi+(1-p)j$, while being able to offer $p A_L[i]+(1-p) A_R[j]$ to the follower.
For a given $k$, the optimal combination of the computed tradeoffs becomes:
	$A_T[k] := \max_{i,j,p} \{ p A_L[i]+(1-p) A_R[j]\quad|\quad p i+(1-p) j\geq k\}$.
This table can be computed in time $O(n^3)$ by looping over all $0\leq i,j,k<n$,
and taking the maximum with the extremal feasible values of $p$.

\paragraph{\bf If $T$ is a chance node,} where the probability of $L$ is $p$,
and the leader combines the strategies that gave the guarantees for $A_L[i]$ and
$A_R[j]$, then the leader would get an expected $pi+(1-p)j$ while being able to
offer $p A_L[i]+(1-p) A_R[j]$ to the follower.
For a given $k$, the optimal combination of the computed tradeoffs becomes:
$	A_T[k] := \max_{i,j} \{ p A_L[i]+(1-p) A_R[j]\quad|\quad p i+(1-p) j\geq k\}$.
The
table $A_T$ can thus be filled in time $O(n^3)$ by looping over all $0\leq
i,j,k<n$, and this can even be improved to $O(n^2)$ by a simple optimization.

\paragraph{\bf If $T$ is a follower node,} then if the leader combines the
strategy for $A_L[i]$ in $L$ with the minmax strategy for $R$, then the
followers best response is $L$ iff $A_L[i]\geq\mu(R)$, and similarly it is $R$
if $A_R[j]\geq\mu(L)$. Thus, the optimal combination becomes
\begin{align*}
	A_T[k] &:= \max(A_L[k]\downarrow_{\mu(R)},
	A_R[k]\downarrow_{\mu(L)})&x\downarrow_{\mu}:=\left\{\begin{matrix}x\text{, if 
}x\geq\mu\\-\infty\text{, otherwise}\end{matrix}\right.
\end{align*}
The table $A_T$ can be filled in time $O(n)$.

Putting it all together, each table can be computed in time $O(n^3)$, and there is one
table for each node in the tree, which gives the desired running time. Let $A_T$
be the table for the root node, and let $i'=\max\{i\ |\	A_T[i]>-\infty\}$. The
strategy associated with $A_T[i']$ guarantees utility that is at most $H_T$ from
the best possible guarantee in the scaled game, and therefore at most $\epsilon$
from the best possible guarantee in the original game.

This completes the proof of the theorem.
\end{proof} ~\\

Next, we prove the analogous statement for the case of pure strategies. Again, the exact problem was shown to be NP-hard by Conitzer and Letchford.

\begin{theorem}\label{app-pure}
	There is an algorithm that takes as input a turn-based game on a tree
	with chance nodes and a parameter $\epsilon$, and computes a pure strategy
	for the leader. That strategy, combined with some best response of the
	follower, achieves a payoff that differs by at most $\epsilon$ from the
	payoff of the leader in a Stackelberg equilibrium in pure strategies. The
	algorithm runs in time $O(\epsilon^{-2}(UH_T)^2T)$, where
	$U=\max_{\sigma,\sigma'}u_1(\sigma)-u_1(\sigma')$, $T$ is the size of the
	game tree and $H_T$ is its height.
\end{theorem}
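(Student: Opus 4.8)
The plan is to reuse verbatim the dynamic-programming skeleton established in the proof of Theorem~\ref{app-behavior}, keeping the same family of tables $A_T$ with $n = H_T U/\epsilon$ entries, indexed by the scaled and discretized leader utility $k$ and storing the best follower utility $A_T[k]$ attainable while securing the leader at least $k$. The only conceptual change is that every quantifier over leader strategies is now restricted to \emph{pure} strategies, so the two-part guarantee (a)--(b) and hence the induction hypothesis are read under this restriction. I would first argue that the leaf, chance, and follower cases transfer without modification. The leaf table is filled from the terminal payoffs exactly as before. At a chance node the leader still commits to a pure strategy in each of $L$ and $R$ while chance selects with probability $p$, so the recurrence $A_T[k] := \max_{i,j}\{pA_L[i]+(1-p)A_R[j] \mid pi+(1-p)j \geq k\}$ is unchanged. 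At a follower node the leader threatens with the minmax strategy in the unchosen subtree, which in a perfect-information turn-based game (even with chance) is achievable by a \emph{pure} strategy whose value equals the backward-induction value $\mu(\cdot)$; hence $A_T[k] := \max(A_L[k]\downarrow_{\mu(R)}, A_R[k]\downarrow_{\mu(L)})$ carries over.

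The one genuinely new case is the leader node. Since the leader may no longer randomize, it must commit to the action leading to exactly one child, so the convex-combination recurrence of the behavioral proof collapses to a simple choice, $A_T[k] := \max(A_L[k], A_R[k])$. I would verify both halves of the induction hypothesis for this recurrence: for (a), committing to whichever child realizes the maximum secures the leader at least $k$ while offering the follower $\max(A_L[k],A_R[k])$; for (b), any pure strategy securing the leader at least $k+H_T$ must enter a single child, say $L$, with a leader guarantee of at least $k + H_T \geq k + H_L$, so part (b) of $A_L$ forbids it from offering the follower strictly more than $A_L[k]$. Crucially, this step introduces \emph{no} new discretization error, so the cumulative additive error remains bounded by the number of chance nodes on any root-to-leaf path, hence by $H_T$ scaled units. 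The error analysis of Theorem~\ref{app-behavior} therefore applies unchanged: after scaling the leader utilities so that one discretization step equals $\epsilon/H_T$ in the original units, the total error is at most $H_T \cdot (\epsilon/H_T) = \epsilon$, and I would defer the formal restatement of this accumulation argument to the appendix, as in the behavioral case.

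For the running time, dropping the probability parameter $p$ reduces the leader-node cost from $O(n^3)$ to $O(n)$, and the follower-node and leaf cost is likewise $O(n)$. The chance node is now the sole bottleneck at $O(n^2)$, from looping over the two indices $i,j$. Summing over the $T$ nodes gives $O(n^2 T) = O(\epsilon^{-2}(UH_T)^2 T)$, matching the claimed bound and improving on the behavioral running time by the factor $n = \Theta(\epsilon^{-1}UH_T)$ that we saved at leader nodes.

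I expect the main obstacle to be the formal verification of part (b) at leader nodes under the pure restriction: one must confirm that forbidding mixing does not force the optimum to require a finer discretization than the one inherited from the children, so that no extra error term appears at these nodes and the $H_T$ bound from the behavioral analysis is genuinely preserved. A secondary point to check carefully is that the pure minmax threat value coincides with the behavioral $\mu(\cdot)$ used in the follower recurrence, which holds because optimal play in a zero-sum perfect-information game on a tree with chance can be taken to be pure.
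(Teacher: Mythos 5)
Your proposal is correct and follows essentially the same route as the paper's proof: reuse the dynamic program of Theorem~\ref{app-behavior} with the induction hypothesis re-quantified over pure strategies, change only the leader-node recurrence, and observe that chance nodes become the $O(n^2)$ bottleneck, giving $O(\epsilon^{-2}(UH_T)^2T)$. The only cosmetic difference is that the paper writes the leader-node table as $A_T[k] := \max\{A_c[i] \mid i\geq k,\ c\in\{L,R\}\}$ rather than your $\max(A_L[k],A_R[k])$; the two coincide since the tables are monotonically non-increasing in $k$ (a fact the paper itself relies on in the follower-node argument), and your direct verification of part (b) at leader nodes---using that a pure leader node passes the utility through with no new discretization error---matches the paper's contradiction argument.
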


\begin{proof}
The algorithm is essentially the same as the one for behavioral strategies, except that leader
nodes only have $p\in\{0,1\}$. The induction hypothesis is similar, except
the quantifications are over pure strategies instead. For a given $k$, the
optimal combination of the computed tradeoffs becomes:
	$$A_T[k] := \max\{A_c[i]\ |\ i\geq k\land c\in\{L,R\}\}.$$
The table $A_T$ can be computed in time $O(n)$.

The performance of the algorithm is slightly better than in the behavioral case, since the
most expensive type of node in the behavioral case can now be handled in linear time. Thus, computing each table now takes at
most $O(n^2)$ time, which gives the desired running time.
\end{proof}

\section{Discussion}
Our paper settles several open questions in the problem of complexity of computing a Stackelberg equilibrium in finite sequential games.
Very often the problem is NP-hard for many subclasses of extensive-form games and we show that the hardness holds also for games in the tree form with concurrent moves.
However, there are important subclasses that admit either an efficient polynomial algorithm, or fully polynomial-time approximation schemes (FPTAS); we provide an FPTAS for games on trees with chance.
The question unanswered within the scope of the paper is whether there exists a (fully) polynomial-time approximation scheme for games in the tree form with concurrent moves.
Our conjecture is that the answer is negative.

Second, we formalize a Stackelberg variant of the Extensive-Form Correlated Equilibrium solution concept (SEFCE) where the leader commits to correlated strategies.
We show that the complexity of the problem is often reduced (to polynomial) compared to NP-hardness when the leader commits to behavioral strategies.
However, this does not hold in general, which is showed by our hardness result for games on DAGs.

Our paper does not address many other variants of computing a Stackelberg equilibrium where the leader commits to correlated strategies.
First of all, we consider only two-player games with one leader and one follower.
Even though computing an Extensive-Form Correlated Equilibrium in games with multiple players is solvable in polynomial time, a recent result showed that computing a SEFCE on trees with no chance with 3 or more players is NP-hard~\cite{cerny2016}.
Second, we consider only behavioral strategies (or memoryless strategies) in games on DAGs.
Extending the concept of SEFCE to strategies that can use some fixed-size memory is a natural continuation of the present work.

\bibliographystyle{plain}


\medskip
\newpage

\section{Appendix: Hardness Results}

In this section we provide the missing proof of NP-hardness.

\subsection{Computing Exact Strategies in Concurrent-Move Games}\label{sec:a2}

For the analysis in this section we use a variant of the NP-complete problem
\textsc{Knapsack}, which we call \textsc{Knapsack with
unit-items}:
\begin{quote}
\textsc{Knapsack with unit-items}:\emph{Given $N$ items with positive
  integer weights $w_1,\dots,w_N$ and values $v_1,\dots,v_N$, a weight
  budget $W$, and a target value $K$, and such that at least $W$ of the
  items have weight and value 1, does there exist $J \in \calP(N)$
  such that $\sum_{i \in J} w_i \leq W$ and $\sum_{i \in J} v_i \geq
  K$?}
\end{quote}

The following lemma will be useful.

\begin{lemma}
  The \textsc{Knapsack with
unit-items} problem is NP-complete.
\end{lemma}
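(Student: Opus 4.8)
The plan is to establish the two standard ingredients of NP-completeness separately. Membership in NP is immediate: a subset $J \in \calP(N)$ serves as a certificate, and verifying the two inequalities $\sum_{i \in J} w_i \le W$ and $\sum_{i \in J} v_i \ge K$ takes polynomial time. The substance of the lemma is therefore NP-hardness, which I would obtain by a reduction from the ordinary \textsc{Knapsack} decision problem (with positive integer weights and values), whose NP-completeness is classical.

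Given an arbitrary \textsc{Knapsack} instance $(w_1,\dots,w_N,v_1,\dots,v_N,W,K)$, I would build a \textsc{Knapsack with unit-items} instance by (i) keeping the original weights and the budget $W$ unchanged, (ii) scaling every value to $v_i' = C v_i$ with $C = W+1$ and setting the target to $K' = CK$, and (iii) appending $W$ fresh unit-items, each of weight and value $1$. The new instance manifestly satisfies the promise, since it contains exactly $W$ items of weight and value $1$ (matching its budget $W$), and all weights and values remain positive integers; the construction is evidently polynomial-time.

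For correctness, the forward direction is trivial: a feasible subset of the original instance, used verbatim and with no unit-items, remains feasible in the new instance, because scaling both $\sum v_i$ and $K$ by the common factor $C$ preserves the value inequality, and the weights are untouched. The interesting direction is to show that a feasible subset $J'$ of the new instance---which may mix original items (call this part $R$) with $u \le W$ unit-items---induces a feasible subset of the original. The weight bound for $R$ is immediate, since discarding the unit-items only lowers the total weight below $W$. The point needing care is the value bound: from $C\sum_{i\in R} v_i + u \ge CK$ I get $\sum_{i\in R} v_i \ge K - u/C$, and since $u \le W < C$ the deficit $u/C$ is strictly less than $1$; integrality of $\sum_{i\in R} v_i$ and $K$ then forces $\sum_{i\in R} v_i \ge K$, so $R$ solves the original instance.

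I expect the main (and essentially only) obstacle to be exactly this last step. The naive padding that simply adds unit-items without rescaling fails, because a unit-item contributes equally to weight and to value and can thus let one meet the value target "for free'' within the budget, turning a no-instance into a yes-instance. Choosing the value blow-up factor $C$ strictly larger than the number of unit-items is precisely what neutralises this effect, rendering the aggregate value contributed by unit-items too small relative to the scaled target to matter once integrality is invoked.
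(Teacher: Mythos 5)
Your proof is correct and uses essentially the same reduction as the paper: scaling the original values by $C = W+1$, keeping the weights and budget, padding with $W$ unit-items, and setting the target to $(W+1)K$. The only difference is that you spell out NP membership and the two-directional correctness argument (including the integrality step that makes the scaling factor work), which the paper's proof states only implicitly by giving the construction.
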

\begin{proof}
  We can reduce from the ordinary \textsc{Knapsack} problem. So given
  $N$ items with weights $w_1,\dots,w_N$ and values $v_1,\dots,v_N$,
  and weight budget $W$ and target $K$, we form $N+W$ items. The
  weight and values of the first $N$ items are given by $w_i$ and
  $(W+1)v_i$, for $i=1,\dots,N$. The next $W$ items are given weight
  and value 1. The weight budget is unchanged $W$, but the new target
  value is $(W+1)K$.
\end{proof}

We can now prove the main result of this section.\\

\emph{\textsc{Theorem}~\ref{th:cm_tree} (restated). 
Given a concurrent-move games in tree form with no chance nodes and a number $\alpha$, it is NP-hard to decide if the leader achieves payoff at least $\alpha$ in a Stackelberg equilibrium in behavior strategies.}
\begin{proof}
Consider an instance of \textsc{Knapsack with unit-items}. We define a concurrent-move extensive-form game in a
way so that the optimal utility attainable by the leader is equal to the
optimal solution value of the \textsc{Knapsack with unit-items} instance.

The game tree consists of two levels (see Figure~\ref{fig:CM_tree})---the root node consisting of $N$ actions of the leader and $N+1$
actions of the follower.  $M$ denotes a large constant that we use to
force the leader to select a uniform strategy in the root node. More
precisely, we choose $M$ as the smallest integer such that $M > WNv_i$
and $M > Nw_i$ for $i=1,\dots,N$.  In the second level, there is a
state $\calI_i$ corresponding to item $i$ that models the decision of
the leader to include items in the subset (action $\oplus$), or not
(action~$\ominus$).

\begin{figure}[h ]
\centering
root node\\
\begin{tabular}{|c|c|c|c|c|c|}
\hline 
~~~~~ &~~~$f_0$~~~& $f_1$ & $f_2$ &~~~$\ldots$~~~& $f_N$ \\ \hline
$l_1$ & $\calI_1$ & $(0,NM-W-M)$ & $(0,-W-M)$ & $\ldots$ & $(0,-W-M)$ \\ \hline
$l_2$ & $\calI_2$ & $(0,-W-M)$ & $(0,NM-W-M)$ & $\ldots$ & $(0,-W-M)$ \\ \hline
$\vdots$ &  &  &  & $\ddots$ &  \\ \hline
$l_N$ & $\calI_N$ & $(0,-W-M)$ & $(0,-W-M)$ & $\ldots$ & $(0,NM-W-M)$ \\ \hline
\end{tabular}
~\\~\\~\\
$\calI_i$\\
\begin{tabular}{|c|c|c|}
\hline 
~~~~~ &~~~$L$~~~&~~~$R$~~~ \\ \hline
$\oplus$ & $(Nv_i,-Nw_i)$ & $(0,-Nw_i)$ \\ \hline
$\ominus$ & $(Nv_,-Nw_i)$  & $(0,0)$ \\ \hline
\end{tabular}
\caption{Game tree for reduction in proof of Theorem~\ref{th:cm_tree}.}\label{fig:CM_tree}
\end{figure}

Consider a feasible solution $J$ to the \textsc{Knapsack with unit-items} problem with
unit-items. This translates into a strategy for the leader as
follows. In the root node she plays the uniform strategy, and in
sub-game $\calI_i$ plays $\oplus$ with probability 1 if $i \in J$ and
plays $\ominus$ with probability 1 otherwise. We can now observe that
the follower plays $L$ in sub-games $\calI_i$ where $i \in J$, since
ties are broken in favor of the leader, and the follower plays $R$ in
sub-games $\calI_i$ where $i \notin J$. In the root node, action $f_0$
for the follower thus leads to payoff $-\sum_{i \in J} w_i \geq
-W$. Actions $f_k$ for $k \geq 1$ leads to payoff
$$\frac{1}{N}(NM-W-M)+\frac{N-1}{N}(-W-M) = -W.$$ 
Since ties are broken
in favor of the leader, the follower plays action $f_0$, which means
that the leader receives payoff $\sum_{i \in J} v_i$, which is the
value of the \textsc{Knapsack with unit-items} solution.

Consider on the other hand an \emph{optimal} strategy for the
leader. By the structure of the game we have the following lemma.
\begin{claim}
  Without loss of generality the leader plays using a pure strategy in
  each sub-game $\calI_i$.
\end{claim}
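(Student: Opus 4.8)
The plan is to argue by a local exchange argument: I would take an arbitrary optimal leader strategy and show that replacing any genuinely mixed behavior inside a subgame $\calI_i$ by the pure action $\ominus$ cannot decrease the leader's payoff, so that an optimal strategy that is pure in every $\calI_i$ must exist.

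First I would record what happens inside a single subgame $\calI_i$ in isolation, as a function of the probability $q_i$ with which the leader commits to $\oplus$ there. A direct inspection of the $2\times 2$ table shows that the follower, playing a best response inside $\calI_i$, strictly prefers $R$ whenever $q_i < 1$ (the leader then collects $0$ and the follower pays $-q_i N w_i$), whereas at $q_i = 1$ the follower is indifferent and, with ties broken in the leader's favor, plays $L$ (the leader collects $N v_i$ and the follower pays $-N w_i$). The two facts I want to extract are: the leader's conditional payoff in $\calI_i$ equals $N v_i$ if $q_i = 1$ and $0$ otherwise; and among all $q_i < 1$ the follower's conditional payoff is weakly largest at $q_i = 0$, where it is $0$.

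Next I would fix an optimal strategy $\sigma$ with root randomization $(p_i)$ and subgame probabilities $(q_i)$, and define $\sigma'$ by rounding every $q_i \in (0,1)$ down to $0$ (pure $\ominus$), leaving the root and the already-pure subgames untouched. I would then compare the follower's root-level best response under $\sigma$ and $\sigma'$. The key observations are: (i) the leader's payoff conditioned on the follower choosing $f_0$ is unchanged, since it depends only on which $q_i$ equal $1$; (ii) the follower's payoff under $f_0$ weakly increases, since each rounded term moves from $-q_i N w_i$ up to $0$; and (iii) the payoffs under every $f_k$ with $k\ge 1$ are untouched, as they do not depend on the $q_i$ at all and give the leader $0$. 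Because all of the leader's payoffs in this game are nonnegative and the leader earns a positive amount only when the follower selects $f_0$, making $f_0$ weakly more attractive can only help: if $f_0$ was a best response under $\sigma$ it remains one under $\sigma'$ and still yields the leader the same value via the tie-break, while if the follower strictly preferred some $f_k$ under $\sigma$ then the leader was already getting $0$, which $\sigma'$ cannot undershoot. Iterating over all mixed indices $i$ then produces an optimal strategy pure in every $\calI_i$.

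The main obstacle I anticipate is not any single computation but correctly handling the interaction between the within-subgame best response and the root-level best response under the Stackelberg tie-breaking convention. I must ensure that rounding down does not inadvertently create a new, follower-preferred root action that pays the leader less; this is precisely why I round toward $\ominus$ (which \emph{raises} the follower's $f_0$-payoff) rather than toward $\oplus$, and why I split the argument into the two cases according to the follower's best response under the original strategy $\sigma$.
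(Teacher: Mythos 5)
Your proposal is correct and follows essentially the same route as the paper's proof: both are local exchange arguments observing that any $q_i<1$ makes the follower play $R$ (leader gets $0$, follower gets $-q_iNw_i\leq 0$), so rounding $q_i$ down to $0$ leaves the leader's $f_0$-conditional payoff unchanged while weakly raising the follower's, which can only help the leader since all of the leader's payoffs are nonnegative and positive payoff requires $f_0$. Your write-up merely makes explicit the root-level tie-breaking case analysis that the paper compresses into the sentence ``there is thus no benefit for the leader in decreasing the utility for the follower.''
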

\begin{proof}
  If in sub-game $\calI$ the leader commits to playing $\oplus$ with
  probability 1, the follower will choose to play L due to ties being
  broken in favor of the leader. If on the other hand the leader plays
  $\oplus$ with probability strictly lower than 1, the follower will
  choose to play R, leading to utility 0 for the leader, and at most 0
  for the follower. Since the leader can only obtain positive utility
  if the follower plays action $f_0$ in the root node, there is thus no
  benefit for the leader in decreasing the utility for the follower by
  committing to a strictly mixed strategy. In other words, if the leader 
  plays $\oplus$ with probability strictly lower than 1, the leader 
  might as well play $\oplus$ with probability 0. 
\end{proof}

Thus from now on we assume that the leader plays using a pure strategy
in each sub-game $\calI_i$. Let $J \in \calP(N)$ be such the set of
indices $i$ of the sub-games $\calI_i$ where the leader commits to
action $\oplus$.

\begin{claim}
  If the strategy of the leader ensures positive utility it chooses an
  action uniformly at random in the root node.
\end{claim}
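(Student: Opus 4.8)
The plan is to show that the enormous constant $M$ forces the leader to spread her root probability evenly. By the previous claim the leader plays a pure strategy in every subgame $\calI_i$; let $J$ be the set of indices where she commits to $\oplus$, and let $p=(p_1,\dots,p_N)$ be her distribution over $l_1,\dots,l_N$ in the root. Since every cell in the columns $f_1,\dots,f_N$ gives the leader payoff $0$, a strictly positive expected payoff for the leader is possible only if the follower's best response places positive weight on $f_0$; by the tie-breaking convention I may then assume the follower actually plays $f_0$, so $f_0$ must be a follower best response.

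I would next express the follower's expected payoff for each action in terms of $p$. Through $f_0$ the game reaches $\calI_i$, where the follower obtains $-Nw_i$ if $i\in J$ (indifferent between $L$ and $R$) and $0$ if $i\notin J$ (strictly preferring $R$); hence $f_0$ yields the follower $-N\sum_{i\in J}p_iw_i$. Each $f_k$ with $k\ge 1$ yields $p_k(NM-W-M)+(1-p_k)(-W-M)=p_kNM-W-M$. The choice $M>Nw_i$ guarantees the $f_0$-value lies in $(-M,0]$, so it is negligible compared with the $p_kNM$ term.

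The core step uses that $f_0$ is a best response: $p_kNM-W-M\le -N\sum_{i\in J}p_iw_i$ for every $k$, i.e. $p_kNM\le W+M-N\sum_{i\in J}p_iw_i=:B$ with the same constant $B$ for all $k$. Summing over $k$ and using $\sum_kp_k=1$ gives $M\le B$, which rearranges to $N\sum_{i\in J}p_iw_i\le W$ (so $J$ is recovered as feasible), while dividing the individual inequality by $NM$ gives $p_k\le 1/N+(W-N\sum_{i\in J}p_iw_i)/(NM)$. The intended conclusion is that the gigantic coefficient $NM$ leaves essentially no slack above $1/N$, so that $\sum_kp_k=1$ pins every $p_k$ to exactly $1/N$.

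The step I expect to be the main obstacle is converting this ``no slack'' intuition into exact equality, since the best-response inequalities carry a slack of order $W/(NM)$. The clean case is an overweighted $l_j$ with $j\in J$: increasing $p_j$ beyond $1/N$ also increases $\sum_{i\in J}p_iw_i$, tightening its own constraint into a contradiction. The delicate case is $j\notin J$, where raising $p_j$ does not move the $f_0$-value directly; here the cleanest route I foresee is to apply the claim to an \emph{optimal} leader strategy and argue that any non-uniform positive-payoff strategy is weakly dominated by the uniform strategy supported on the same $J$, so that a uniform optimum always exists. I would phrase the final step as a contrapositive---if $p$ is not uniform, take $j=\argmax_k p_k$ with $p_j>1/N$ and show the follower strictly prefers $f_j$ to $f_0$, collapsing the leader's payoff to $0$---with the $j\notin J$ subcase being where the real work of controlling the $W/(NM)$ slack resides.
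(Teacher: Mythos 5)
Your setup---the follower's payoffs, the best-response inequalities $p_kNM-W-M\le -N\sum_{i\in J}p_iw_i$, and the observation that these only pin each $p_k$ down to within a slack of order $W/(NM)$ above $1/N$---matches the paper's, and you correctly identified that closing this slack is the crux. However, both of your proposed ways of closing it fail, and for the same underlying reason: the claim is \emph{false} for arbitrary positive-payoff strategies, so no argument that ignores optimality of the leader's strategy (which is the standing assumption in the surrounding text of the paper) can possibly work. Concretely, your ``clean case'' ($j\in J$ overweighted) has a counterexample: take $N=2$, $W=2$, two unit items, $J=\{1\}$, and $p_1=\tfrac12+\eps$ with $0<\eps\le 1/(2M+2)$. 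The follower's payoff from $f_0$ is $-1-2\eps$, from $f_1$ it is $2\eps M-2$, and from $f_2$ it is $-2\eps M-2$, so $f_0$ remains a best response and the leader earns $1+2\eps>0$: a non-uniform, positive-payoff strategy with the overweighted index inside $J$. Your tightening intuition fails quantitatively because raising $p_j$ costs only $Nw_j\eps$ on the right-hand side of the constraint against $NM\eps$ on the left, and when $N\sum_{i\in J}p_iw_i$ is strictly below $W$ there is genuine room for both. The same example also refutes the domination claim in your ``delicate case'': this non-uniform strategy earns $1+2\eps$, \emph{strictly more} than the uniform strategy supported on the same $J$ (which earns $1$), so non-uniform strategies are not weakly dominated by uniform ones on the same support.

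The paper's proof splits instead on $\sum_{i\in J}w_i\ge W$ versus $\sum_{i\in J}w_i<W$ (not on $j\in J$ versus $j\notin J$). In the first case, a direct computation with $k=\argmax_i\eps_i$, using only $M>Nw_i$ and $-W\ge-\sum_{i\in J}w_i$, shows the follower strictly prefers $f_k$ to $f_0$, with no appeal to optimality. In the second case---which is exactly where your counterexample lives---it invokes the unit-items property: since $\sum_{i\in J}w_i<W$ and there are at least $W$ unit items, some unit item lies outside $J$, so there exists a leader strategy (uniform, committing to $\oplus$ on $J$ plus that extra item) worth $1+\sum_{i\in J}v_i$. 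Optimality of the given strategy then forces $N\sum_{i\in J}\eps_iv_i\ge 1$, hence $\eps_k\ge 1/(N^2\max_iv_i)$, and the choice $M>WNv_i$ makes the follower's payoff $\eps_kNM-W$ from $f_k$ strictly positive while $f_0$ yields at most $0$---contradiction. Your proposal never uses the unit-items structure at all, yet it is precisely the ingredient that converts ``under budget'' into ``suboptimal by at least $1$'' and thereby controls the $W/(NM)$ slack you were worried about.
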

\begin{proof}
  Let $\eps_i \in [-\frac{1}{N},1-\frac{1}{N}]$ be such that the
  leader commits to playing action $l_i$ with probability
  $\frac{1}{N}+\eps_i$. Then if the follower plays action $f_0$, the
  leader obtains payoff 
$$\sum_{i \in J}v_i + N\sum_{i \in J}\eps_iv_i$$
  and the follower obtains payoff 
$$-\sum_{i \in J}w_i - N\sum_{i \in J}\eps_iw_i.$$ 
If the follower plays action $f_k$, for $k\geq 1$,
  the leader obtains payoff $0$ and the follower obtains payoff
  $\eps_kNM-W$.
  
  Let $k$ be such that $\eps_k = \max_i{\eps_i}$, and assume to the
  contrary that $\eps_k>0$. Note that
\[
\eps_k \geq \frac{1}{N}\sum_{i: \eps_i>0} \eps_i = -\frac{1}{N}\sum_{i: \eps_i <0} \eps_i \enspace .
\]
We now proceed by case analysis.
\paragraph{Case 1 $\left(\sum_{i\in J} w_i \geq W\right)$:}
By definition of $\eps_k$ and $M$ we have
\[
\begin{split}
\eps_k M & \geq \left(-\frac{1}{N}\sum_{i \in J : \eps_i<0} \eps_i\right)M  > -\frac{1}{N}\sum_{i \in J : \eps_i<0} \eps_i (N w_i) \\& = - \sum_{i \in J : \eps_i<0} \eps_i w_i \geq - \sum_{i \in J} \eps_i w_i
\end{split}
\]
Multiplying both sides of the inequality by $N$ and using the inequality: $-W \geq
-\sum_{i\in J} w_i$, we have
\[
\eps_k NM -W > -\sum_{i\in J} w_i - N\sum_{i \in J} \eps_i w_i \enspace ,
\]
which means that action $f_k$ is preferred by the follower. Thus the
leader receives payoff 0.
\paragraph{Case 2 $\left(\sum_{i\in J} w_i < W\right)$:} Since we have a
\textsc{Knapsack} with unit-items instance, there is a knapsack
solution that obtains value $1+\sum_{i \in J} v_i$, which corresponds
to a strategy for the leader that obtains the same utility. Since the
current strategy is optimal for the leader we must have $\sum_{i \in
  J}v_i + N\sum_{i \in J}\eps_iv_i \geq 1 + \sum_{i \in J} v_i$, which
means that $1\leq N\sum_{i \in J}\eps_iv_i \leq (N^2 \max_i
v_i)\eps_k$, and thus $\eps_k \geq 1/(N^2 \max_i v_i)$. We then have by definition of $M$ that
\[
\eps_kNM-W \geq \frac{NM}{N^2 \max_i v_i}-W > 0 \enspace .
\]
Thus the payoff for the follower is strictly positive for the action
$f_k$, and this is thus preferred to $f_0$, thus leading to payoff 0 to the leader.
\end{proof}

Since there is a strategy for the leader that obtains strictly positive
payoff, we can thus assume that the strategy for the leader chooses an
action uniformly at random in the root node, and the follower chooses
action $f_0$. Since $f_0$ is preferred by the follower to any other
action this means that $\sum_{i \in J} w_i \leq W$, and the leader
obtains payoff $\sum_{i \in J} v_i$. Thus this corresponds exactly to
a feasible solution to the \textsc{Knapsack with unit-items} instance of the same value.
\end{proof}

\section{Appendix: Approximating Optimal Strategies}

In this section we provide the missing details for the algorithms that approximate the optimal strategies for the leader to commit to, for both the behavioral and pure case. \\

\textsc{Theorem \ref{app-behavior}}
\emph{(restated) There is an algorithm that takes as input a turn-based game on a tree
	with chance nodes and a parameter $\epsilon$, and computes a behavioral strategy for the leader. That strategy, combined with some best response of the follower, achieves a payoff that differs by at most $\epsilon$ from the payoff of the leader in a Stackelberg equilibrium in behavioral strategies. The algorithm runs in time $O(\epsilon^{-3}(UH_T)^3T)$, where
	$U=\max_{\sigma,\sigma'}u_1(\sigma)-u_1(\sigma')$, $T$ is the size of the
	game tree and $H_T$ is its height.} \\

We have provided the algorithm in the main body of the paper; its correctness and runtime will follow from the next lemma.

\begin{lemma}
The algorithm of Theorem \ref{app-behavior} is correct and has the desired runtime.
\end{lemma}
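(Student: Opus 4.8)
The plan is to prove the two-part invariant stated in the main text---call its clauses (a) and (b)---simultaneously by structural induction on the game tree, and then to read off both correctness and the running time from the root table. The base case is immediate: for a leaf the height is $0$, so (a) asks only that the single achievable pair $(u_1,u_2)$ be recorded for all $k\le u_1$, and (b) with its $+H_T=+0$ slack is the same statement read in reverse. All the content is in the inductive step, and I would treat the node types in the order leaf, leader, chance, follower, assuming (a) and (b) for the successors $L,R$ with heights $H_L,H_R$ and height $H_T=1+\max(H_L,H_R)$ for $T$.

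For a leader node, part (a) is a direct construction: given the maximizing triple $(i,j,p)$ defining $A_T[k]$, I compose the strategies guaranteed by the inductive hypothesis in $L$ and $R$ and randomize with probability $p$; this secures the leader $pi+(1-p)j\ge k$ and offers the follower exactly $pA_L[i]+(1-p)A_R[j]=A_T[k]$. Part (b) is the heart of the argument and I would prove it by contradiction. Suppose some leader strategy secures $\ge k+H_T$ while offering the follower strictly more than $A_T[k]$; decompose it as ``go left with probability $p$'' together with sub-strategies securing $a,b$ and offering $f_L,f_R$. Setting $i=\floor{a-H_L}$ and $j=\floor{b-H_R}$, the inductive hypothesis (b) for $L$ and $R$ forces $f_L\le A_L[i]$ and $f_R\le A_R[j]$, while a short computation shows $pi+(1-p)j\ge k$, so $(i,j,p)$ is feasible in the maximization defining $A_T[k]$ and hence $A_T[k]\ge pf_L+(1-p)f_R>A_T[k]$, a contradiction.

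The one delicate estimate---and the step I expect to be the main obstacle---is exactly the inequality $pi+(1-p)j\ge k$, since this is where the additive error is paid for. The two floors cost at most $p\cdot 1+(1-p)\cdot 1=1$, and the weighted heights satisfy $pH_L+(1-p)H_R\le\max(H_L,H_R)=H_T-1$; together these consume precisely the $H_T$ units of slack budgeted in the hypothesis, which is why the invariant is tight and why the error grows by exactly one discretized unit per level. (One should also check that clamped indices $i,j$ stay in the valid table range, but this is routine.) The chance node is the same argument with $p$ fixed to the chance probability. The follower node is easier: here the leader does not mix across $L$ and $R$, so no flooring occurs. For (b) I would note that the follower's best response selects the branch of larger follower value, that the off-branch minmax guarantees the follower at least $\mu(\cdot)$ there, and that these two facts together force the chosen branch to clear the $\downarrow_{\mu}$ threshold; the hypothesis applied with the \emph{same} index $k$ then bounds the offered value by $A_T[k]$, with height slack to spare.

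Finally I would extract the answer at the root. Writing $i'=\max\{i : A_T[i]>-\infty\}$, part (a) yields a strategy securing scaled leader value $\ge i'$, while part (b) applied at $k=i'+1$ shows that no strategy secures $\ge i'+1+H_T$, so the scaled optimum is at most $i'+H_T$ and the returned strategy is within $H_T$ scaled units of optimal. Since each table has $n=H_TU/\epsilon$ entries, one scaled unit corresponds to $\epsilon/H_T$ of real utility, so the additive error is at most $\epsilon$, as claimed. For the runtime, the dominant cost is the leader node, whose table is filled in $O(n^3)$ by looping over $i,j,k$; multiplying by the $T$ nodes gives $O(n^3T)=O(\epsilon^{-3}(UH_T)^3T)$.
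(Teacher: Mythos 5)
Your proposal is correct and follows essentially the same route as the paper's own proof: the same two-part invariant proved by structural induction, the same contradiction argument at leader and chance nodes with the floor-plus-weighted-height accounting of the $H_T$ units of slack, the same threshold-clearing observation at follower nodes, and the same root extraction and $O(n^3 T)$ runtime bound. The differences are cosmetic only (you index the children by their own heights $H_L, H_R$ where the paper uses $H_T - 1$ uniformly, and you invoke part (b) at $k = i'+1$ rather than at $i'$ when reading off the root table).
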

\begin{proof}
Recall we are given a turn-based game on a tree with chance nodes and parameter $\epsilon$, and the goal is to compute a behavioral strategy for the leader. We constructed the algorithm in Theorem~\ref{app-behavior} so that it uses dynamic programming to store a table of values for each node in the tree, i.e. a discretized representation of the possible tradeoffs between the utility that the leader can get and the utility that can simultaneously be offered to the follower. The crucial part for proving correctness is arguing that the cumulative error in the leader's utility is bouned additively by the height of the tree. 

For clarity, we repeat the induction hypothesis here. For each sub-tree $T$, the table associated with it, $A_T$, has the following guarantee at each index $k$ in the table:
\begin{enumerate}[a)]
	\item the leader has a strategy for the game tree $T$ that offers the
		follower utility $A_T[k]$ while securing utility {\em at least} $k$ to
		the leader.
	\item no strategy of the leader can (starting from sub-tree $T$) offer the
		follower utility {\em strictly more} than $A_T[k]$, while securing 
		utility {\em at least} $k+H_T$ to the leader, where $H_T$ is the 
		height of the tree $T$.
\end{enumerate}

We now argue this holds for each type of node in the tree. Note that the base case holds trivially by construction, since it is associated with the leaves of the tree. \\

\textbf{Leader nodes} \\

Let $T$ be a leader node, with successors $L$ and $R$, each with tables $A_L$
and $A_R$. If the leader plays $L$ with probability $p$ and plays $R$ with the
remaining probability $(1-p)$, followed up by the strategies that gave the
guarantees for $A_L[i]$ and $A_R[j]$, then the leader would get an expected
$pi+(1-p)j$, while being able to offer $p A_L[i]+(1-p) A_R[j]$ to the follower.
For a given $k$, the optimal combination of the computed tradeoffs becomes:
\begin{align*}
	A_T[k] &:= \max_{i,j,p} \{ p A_L[i]+(1-p) A_R[j]\quad|\quad p i+(1-p) j\geq k\}
\end{align*}

For part 1 of the induction hypothesis, the strategy that guarantees $A_T[k]$ 
simply combines the strategies for the maximizing $A_L[i]$ and $A_R[j]$ along
with the probability $p$ at node $T$. For a given $i$, $j$, and $k$, finding the
optimal value $p$ amounts to maximizing a linear function over an interval,
i.e., it will attain its maximum at one of the end points of the interval. The
table $A_T$ can thus be filled in time $O(n^3)$ by looping over all $0\leq
i,j,k<n$, where $n$ is the number of entries in each table.

For part 2 of the induction hypothesis, assume for contradiction that some strategy $\sigma$ yields utilities 
	$(u_1^\sigma, u_2^\sigma)$ with
	\begin{equation}
		u_1^\sigma\geq k+H_T\quad\text{and}\quad u_2^\sigma>A_T[k]
	\end{equation}
	Let $p_\sigma$ be the probability that $\sigma$ assigns to the action $L$, 
	and let $(u_1^{\sigma,L},u_2^{\sigma,L})$ and $(u_1^{\sigma,R},u_2^{\sigma,R})$
	be the utilities from playing $\sigma$ and the corresponding follower strategy
	in the left and right child respectively. By definition,
	\begin{equation}
		u^\sigma_l=p_\sigma\cdot u_l^{\sigma,L}+(1-p_\sigma)\cdot u_l^{\sigma,R},\quad\forall l\in\{1,2\}
	\end{equation}
	By the induction hypothesis,
	\begin{equation}
		u_2^{\sigma,c}\leq A_c[\floor{u_1^{\sigma,c}}-H_T+1],\quad\forall c\in\{L,R\}
	\end{equation}
	Thus,
	\begin{equation}\label{fptas:leaderineq}
		A_T[k]<u_2^\sigma\leq p_\sigma\cdot A_L[\floor{u_1^{\sigma,L}}-H_T+1]+(1-p_\sigma)\cdot
		A_R[\floor{u_1^{\sigma,R}}-H_T+1]
	\end{equation}
	But
	\begin{align}
		&p_\sigma\cdot (\floor{u_1^{\sigma,L}}-H_T+1)+(1-p_\sigma)\cdot (\floor{u_1^{\sigma,R}}-H_T+1)\\
		&\geq p_\sigma\cdot (u_1^{\sigma,L}-H_T)+(1-p_\sigma)\cdot (u_1^{\sigma,R}-H_T)\\
		&=u_1^\sigma-H_T\geq k
	\end{align}
	meaning that $i=\floor{u_1^{\sigma,L}}-H_T+1$ and 
	$j=\floor{u_1^{\sigma,R}}-H_T+1$ satisfy the constraints in the definition 
	of $A_T[k]$, which contradicts the assumption that $u_2^\sigma>A_T[k]$. \\

\textbf{Chance nodes} \\

Let $T$ be a chance node, with successors $L$ and $R$, each with tables $A_L$
and $A_R$, and let $p$ be the probability that chance picks $L$. If the leader
combines the strategies that gave the guarantees for $A_L[i]$ and $A_R[j]$, then
the leader would get an expected $pi+(1-p)j$ while being able to offer $p
A_L[i]+(1-p) A_R[j]$ to the follower.
For a given $k$, the optimal combination of the computed tradeoffs becomes:
\begin{align*}
	A_T[k] &:= \max_{i,j} \{ p A_L[i]+(1-p) A_R[j]\quad|\quad p i+(1-p) j\geq k\}
\end{align*}

For part 1 of the induction hypothesis, the strategy that guarantees $A_T[k]$ 
simply combines the strategies for the maximizing $A_L[i]$ and $A_R[j]$. The
table $A_T$ can thus be filled in time $O(n^3)$ by looping over all $0\leq
i,j,k<n$, and this can even be improved to $O(n^2)$ by a simple optimization.

For part 2 of the induction hypothesis, assume for contradiction that some strategy $\sigma$ yields utilities 
	$(u_1^\sigma, u_2^\sigma)$ with
	\begin{equation}
		u_1^\sigma\geq k+H_T\quad\text{and}\quad u_2^\sigma>A_T[k]
	\end{equation}
	Let $(u_1^{\sigma,L},u_2^{\sigma,L})$ and $(u_1^{\sigma,R},u_2^{\sigma,R})$
	be the utilities from playing $\sigma$ and the corresponding follower strategy
	in the left and right child respectively. By definition,
	\begin{equation}
		u^\sigma_l=p\cdot u_l^{\sigma,L}+(1-p)\cdot u_l^{\sigma,R},\quad\forall l\in\{1,2\}
	\end{equation}
	By the induction hypothesis,
	\begin{equation}
		u_2^{\sigma,c}\leq A_c[\floor{u_1^{\sigma,c}}-H_T+1],\quad\forall c\in\{L,R\}
	\end{equation}
	Thus,
	\begin{equation}\label{fptas:chanceineq}
		A_T[k]<u_2^\sigma\leq p\cdot A_L[\floor{u_1^{\sigma,L}}-H_T+1]+(1-p)\cdot
		A_R[\floor{u_1^{\sigma,R}}-H_T+1]
	\end{equation}
	But
	\begin{align}
		&p\cdot (\floor{u_1^{\sigma,L}}-H_T+1)+(1-p)\cdot (\floor{u_1^{\sigma,R}}-H_T+1)\\
		&\geq p\cdot (u_1^{\sigma,L}-H_T)+(1-p)\cdot (u_1^{\sigma,R}-H_T)\\
		&=u_1^\sigma-H_T\geq k
	\end{align}
	meaning that $i=\floor{u_1^{\sigma,L}}-H_T+1$ and 
	$j=\floor{u_1^{\sigma,R}}-H_T+1$ satisfy the constraints in the definition 
	of $A_T[k]$, which contradicts the assumption that $u_2^\sigma>A_T[k]$. \\

\textbf{Follower Nodes} \\

Let $T$ be a follower node, with successors $L$ and $R$, each with tables $A_L$
and $A_R$, and let $\tau_L$ and $\tau_R$ be the min-max value for the follower
in $L$ and $R$ respectively. If the leader combines the strategy for $A_L[i]$ in
$L$ with the minmax strategy for $R$, then the followers best response is $L$,
iff $A_L[i]\geq\tau_R$, and similarly it is $R$ if $A_R[j]\geq\tau_L$. Thus, if we let
\begin{align*}
x\downarrow_{\tau}&:=\left\{\begin{matrix}x&\text{, if 
}x\geq\tau\\-\infty&\text{, otherwise}\end{matrix}\right.
\end{align*}
then the optimal combination becomes
\begin{align*}
	A_T[k] &:= \max(A_L[k]\downarrow_{\tau_R}, A_R[k]\downarrow_{\tau_L})\\
\end{align*}

For part 1 of the induction hypothesis, the strategy that guarantees $A_T[k]$ 
simply combines the strategies for the maximizing $A_L[i]$ or $A_R[j]$ in one
branch, and playing minmax in the other. The table $A_T$ can thus be filled in
time $O(n)$.

For part 2 of the induction hypothesis, let $H_T$ be the height of the tree $T$. Suppose that some strategy $\sigma$ yields
	$(u_1^\sigma, u_2^\sigma)$ with
	\begin{equation}
		u_1^\sigma\geq k+H_T\quad\text{and}\quad u_2^\sigma>A_T[k]
	\end{equation}
	Assume wlog.~that the follower plays $L$. Let $(u_1^{\sigma,L},u_2^{\sigma,L})$
	be the utilities from playing $\sigma$ and the corresponding follower strategy
	in the left child. Combined with the induction hypothesis, we get
	\begin{equation}
		A_T[k]<u_2^{\sigma,T}=u_2^{\sigma,L}\leq A_L[\floor{u_1^{\sigma,L}}-H_T+1]=A_T[\floor{u_1^\sigma}-H_T+1]
	\end{equation}
	But this is a contradiction, since $A_T[k]$ is monotonically decreasing in 
	$k$ and 
	\begin{equation}
		\floor{u_1^\sigma}-H_T+1\geq u_1^\sigma-H_T\geq k 
	\end{equation}

From the arguments above, the induction hypothesis holds for all types of nodes, 
and can be computed
in polynomial time in the size of the tree and the number of entries in the
tables.

To complete the proof of the theorem,
let $D=\frac\epsilon{H_T}$ be the initial scaling of the leaders utility.
	Each table for the nodes will now contain
	$n=\frac{U}{D}=\frac{UH_T}{\epsilon}$ entries. Given the tables for the
	successors, each table can be computed in time
	$O(n^3)=O(\epsilon^{-3}(UH_T)^3)$. Since there are $T$ tables in total, we
	get the desired running time.

	Let $A_T$ be the table for the whole tree, and let $i'=\max\{i\ |\
	A_T[i]>-\infty\}$. The strategy associated with $A_T[i']$ guarantees utility
	$i'$ to the leader, and the induction hypothesis directly gives us that no
	strategy of the leader can guarantee more than $i'+H_T$. By dividing by the
	scaling factor $D$, we get that the strategy associated with $A_T[i']$
	guarantees a value that is at most $\epsilon$ lower than that of any other
	strategy.
\end{proof}

If we are only interested in commitment to pure strategies, a very similar
scheme can be constructed---this was stated in Theorem \ref{app-pure}. \\

\textsc{theorem} \ref{app-pure}
\emph{(restated) There is an algorithm that takes as input a turn-based game on a tree
	with chance nodes and a parameter $\epsilon$, and computes a pure strategy
	for the leader. That strategy, combined with some best response of the
	follower, achieves a payoff that differs by at most $\epsilon$ from the
	payoff of the leader in a Stackelberg equilibrium in pure strategies. The
	algorithm runs in time $O(\epsilon^{-2}(UH_T)^2T)$, where
	$U=\max_{\sigma,\sigma'}u_1(\sigma)-u_1(\sigma')$, $T$ is the size of the
	game tree and $H_T$ is its height.} \\

In essence, the algorithm for Theorem \ref{app-pure} is the same, except leader
nodes only consider $p\in\{0,1\}$. The induction hypothesis is the same, except
the quantifications are over pure strategies instead. 
We argue the correctness of this algorithm formally in the following lemma. 

\begin{lemma}
The algorithm for Theorem \ref{app-pure} is correct and has the desired runtime.
\end{lemma}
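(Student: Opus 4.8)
The plan is to specialize the correctness argument of Theorem~\ref{app-behavior} to pure strategies, proving the same two-part invariant for every table $A_T$ by bottom-up induction, but with all quantifiers ranging over pure leader strategies: for each index $k$, (a) the leader has a pure strategy on $T$ offering the follower $A_T[k]$ while securing at least $k$, and (b) no pure strategy offers the follower strictly more than $A_T[k]$ while securing at least $k+H_T$. The leaf base case is unchanged. I would then observe that the chance-node and follower-node cases transfer essentially verbatim, so that the only case demanding fresh work is the leader node, which in fact becomes simpler.

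For the leader node the pure restriction removes the mixing probability, collapsing the recurrence to $A_T[k] := \max\{A_c[i] \mid i \geq k,\ c \in \{L,R\}\}$. Part (a) is immediate: if the maximum is attained at child $c$ and index $i \geq k$, I route deterministically into $c$ and follow the pure strategy guaranteed by the inductive (a) for $A_c[i]$, which secures at least $i \geq k$ and offers the follower $A_c[i] = A_T[k]$. For part (b), suppose a pure $\sigma$ secures $u_1^\sigma \geq k+H_T$ yet offers $u_2^\sigma > A_T[k]$. Since $\sigma$ is pure it enters a single child $c$, so $(u_1^\sigma, u_2^\sigma) = (u_1^{\sigma,c}, u_2^{\sigma,c})$ and its restriction to $c$ secures $u_1^{\sigma,c} \geq k + H_T \geq k + H_c$; the inductive (b) for $c$ then forces $u_2^{\sigma,c} \leq A_c[k] \leq A_T[k]$, the desired contradiction. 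Note this avoids the floor and convex-combination bookkeeping of the behavioral leader node entirely.

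The chance node is handled exactly as before: chance makes no leader choice, the two children's guarantees are realized by pure strategies which the leader simply plays in parallel (again a pure strategy), and the expected-utility inequality together with the height offset reproduces the behavioral contradiction. The follower node likewise reuses the recurrence $A_T[k] := \max(A_L[k]\downarrow_{\tau_R}, A_R[k]\downarrow_{\tau_L})$, but here one point must be \emph{checked} rather than copied: for the part-(a) construction the punishment value $\tau$ deterring the follower from the rejected branch must be attainable by a \emph{pure} leader strategy. I would discharge this by recalling that in a turn-based perfect-information game with chance the minmax value $\mu(s)=\tau$ of each subtree is obtained by backward induction, hence realized by a pure leader strategy together with a pure follower best response; so the leader can pure-commit to the minmax punishment and the behavioral follower-node argument applies unchanged.

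For the complexity, the leader recurrence is a single suffix-maximum sweep over a child table, costing $O(n)$ instead of $O(n^3)$; the bottleneck per node therefore drops to the chance node's $O(n^2)$, and with $n = UH_T/\epsilon$ entries and $T$ nodes this gives the claimed $O(\epsilon^{-2}(UH_T)^2 T)$ time. The final guarantee is identical to the behavioral case: at the root, taking $i' = \max\{i : A_T[i] > -\infty\}$, part (a) supplies a pure strategy securing $i'$ while part (b) caps every pure strategy at $i' + H_T$, so dividing out the scaling $D = \epsilon/H_T$ shows the output is within $\epsilon$ of the optimal pure commitment. I expect the single genuine subtlety—and thus the main thing to get right—to be the pure-realizability of the minmax punishment at follower nodes; the remaining cases are routine specializations of the behavioral proof.
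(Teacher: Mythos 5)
Your proposal is correct and follows essentially the same route as the paper: the same two-part inductive invariant specialized to pure strategies, the same collapsed leader-node recurrence with part (a) by direct construction and part (b) by contradiction via the child's inductive guarantee, the chance and follower cases reused verbatim, and the same $O(n^2)$-per-table accounting dominated by the chance node. Your two refinements---avoiding the floor-index bookkeeping at leader nodes (possible since a pure choice passes utilities through unchanged) and explicitly verifying that the minmax punishment at follower nodes is realized by a pure leader strategy via backward induction (a point the paper leaves implicit)---are sound but do not change the argument.
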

\begin{proof}
We have the same construction and induction hypothesis as in Theorem \ref{app-behavior}.
Let $T$ be a leader node, with successors $L$ and $R$, each with tables $A_L$
and $A_R$. If the leader plays $L$ (or $R$), followed up by the strategies that gave the
guarantees for $A_L[k]$ (or $A_R[k]$), then the expected leader utility would be
$k$, while being able to offer $A_L[k]$ (or $A_R[k]$ resp.) to the follower.
For a given $k$, the optimal combination of the computed tradeoffs becomes:
\begin{align*}
	A_T[k] &:= \max\{A_c[i]\ |\ i\geq k\land c\in\{L,R\}\}
\end{align*}

For part 1 of the induction hypothesis we simply use the move that maximizes the expression
combined with the strategies that guarantee $A_L[k]$ and $A_R[k]$ in the
successors. The table $A_T$ can thus be filled in time $O(n)$.

For part 2 of the induction hypothesis, assume for contradiction that some pure strategy $\pi$ yields utilities 
	$(u_1^\pi, u_2^\pi)$ with
	\begin{equation}
		u_1^\pi\geq k+H_T\quad\text{and}\quad u_2^\pi>A_T[k]
	\end{equation}
	Assume wlog.~that $\pi$ plays $L$ at $T$, and let $(u_1^{\pi,L},u_2^{\pi,L})$
	be the utilities from playing $\pi$ and the corresponding follower strategy
	in $L$. By definition,
	\begin{equation}
		u^\pi_l=u_l^{\pi,L},\quad\forall l\in\{1,2\}
	\end{equation}
	By the induction hypothesis,
	\begin{equation}
		u_2^{\pi,L}\leq A_L[\floor{u_1^{\pi,L}}-H_T+1]
	\end{equation}
	Thus,
	\begin{equation}\label{fptas:leaderineq}
		A_T[k]<u_2^\pi\leq A_L[\floor{u_1^{\pi,L}}-H_T+1]
	\end{equation}
	But
	\begin{align}
		&\floor{u_1^{\pi,L}}-H_T+1\geq u_1^{\pi,L}-H_T=u_1^\pi-H_T\geq k
	\end{align}
	meaning that $i=\floor{u_1^{\pi,L}}-H_T+1$ satisfies the constraints in the definition 
	of $A_T[k]$, which contradicts the assumption that $u_2^\pi>A_T[k]$.

The proof for the other nodes is identical to those for mixed strategies. The
modified induction hypothesis thus holds for all types of nodes, and can be computed
in polynomial time in the size of the tree and the number of entries in the
tables. The proof of Theorem \ref{app-pure} is very similar to that of Theorem
\ref{app-behavior}, except the calculation has become slightly more efficient.
This completes the proof.
\end{proof}



\end{document}